\documentclass[10pt,journal]{IEEEtran}
\IEEEoverridecommandlockouts

\usepackage{amsmath}
 
\usepackage{amssymb}
\usepackage{mathptmx}

\usepackage{wrapfig} 
\usepackage{pifont}
\usepackage{mdframed}

\usepackage{bbding} 
\newcommand{\cmark}{\textcolor{green!80!black}{\ding{51}}}
\newcommand{\xmark}{\textcolor{red}{\ding{55}}}

\usepackage{url}
\usepackage{hyperref}
\hypersetup{
    colorlinks=true,
    linkcolor=blue,
    filecolor=magenta,      
    urlcolor=magenta,
    citecolor = magenta,
    pdftitle={An Example},
    pdfpagemode=FullScreen,
    }
\usepackage{tikz}
\usetikzlibrary{arrows.meta}
\usetikzlibrary{automata,positioning,backgrounds}
\usepackage{pgfplots}
\pgfplotsset{compat=1.18}

\usepackage{threeparttable}
\usepackage{capt-of}

\renewcommand*{\arraystretch}{1.5}%
\definecolor{tabred}{RGB}{230,36,0}%
\definecolor{tabgreen}{RGB}{0,116,21}%
\definecolor{taborange}{RGB}{250,124,30}%
\definecolor{tabbrown}{RGB}{171,70,0}%
\definecolor{tabyellow}{RGB}{251,253,169}%
\definecolor{warmone}{RGB}{247,248,242}%
\definecolor{warmtwo}{RGB}{248,240,234}%
\definecolor{warmthree}{RGB}{244,242,236}%
\definecolor{protocolteal}{RGB}{82,150,156}%
\newcommand*{\vcorr}{%
  \vadjust{\vspace{-\dp\csname @arstrutbox\endcsname}}%
  \global\let\vcorr\relax
}%

\usepackage{mathtools}

\usepackage{lscape}
\usepackage[figuresright]{rotating}
\usepackage[graphicx]{realboxes}
\usepackage{adjustbox}
\usepackage{caption}

\usepackage{subfigure}

\usepackage{colortbl} 
\usepackage{xfrac}

\usepackage{float}
\newcommand{\qw}[1]{#1}
\newenvironment{revision}{\begingroup}{\endgroup}

\usepackage{fbox} 
\usepackage{fancybox}

\usepackage{multirow}

\def\BibTeX{{\rm B\kern-.05em{\sc i\kern-.025em b}\kern-.08em
    T\kern-.1667em\lower.7ex\hbox{E}\kern-.125emX}}

\usepackage{hyperref}

\usepackage{tabularx,makecell, caption}
\usepackage{enumitem} 

\newcolumntype{L}{>{\arraybackslash}X}

\usepackage{multicol}
\usepackage{listings}
\usepackage{blindtext}
\usepackage{etoolbox,xstring,mfirstuc,textcase}
\usepackage{listings}

\usepackage{subfiles}
\usepackage[most]{tcolorbox}

\usepackage{xcolor}
\usepackage{amsthm}
\theoremstyle{plain}         
 
\newtheorem{thm}{Theorem}

 \newtheorem{lemma}{Lemma}
\newtheorem{defi}{Definition}

\newtheorem*{prf}{Proof}
\usepackage{siunitx}
\usepackage{comment}
\usepackage{indentfirst} 
\usepackage{framed} 

\usepackage[font=small,labelfont=bf,tableposition=top]{caption}
\usepackage{booktabs}
\usepackage{dashbox}
\usepackage{stfloats}   
\usepackage{placeins}   

\usepackage{listings}
\usepackage{color}
\usepackage{algorithm,algpseudocode} 

\begin{document}
\title{DAO to (Anonymous) DAO Transactions}

\author{
Minfeng Qi$^{\diamondsuit}$, Lin Zhong$^{\nabla}$, Qin Wang$^{\heartsuit}$
\\ \vspace{0.1in}
$^{\diamondsuit}$\textit{City University of Macau} $|$ $^{\nabla}$\textit{Binance} $|$  $^{\heartsuit}$\textit{CSIRO} 
}


\maketitle

\begin{abstract}
Blockchain assets are increasingly controlled by organizations rather than individuals. DAO treasuries, consortium wallets, and custodial exchanges rely on threshold authorization and multi-party key management, yet existing payment mechanisms still target single-user wallets, leaving no unified solution for organizational transfers. We formalize the problem of \emph{DAO-to-(anonymous)-DAO} transactions and present \textsc{Dao$^2$}, a framework that enables one threshold-controlled organization to pay another, optionally with recipient anonymity, while keeping received funds under distributed control. \textsc{Dao$^2$} combines three components: \emph{distributed key derivation} (DKD) for non-stealth child addresses, \emph{distributed stealth-address generation} (DSAG) for unlinkable one-time destinations, and \qw{\emph{threshold ECDSA}} for authorization. For ordinary transfers, the receiver derives a non-stealth address via DKD; for anonymous transfers, it derives a stealth address via DSAG. The sender then threshold-signs the payment, and the receiver redeems the funds without reconstructing any master secret. \qw{We formally prove its security and evaluate a prototype. The core of an anonymous DAO-to-DAO transaction for a typical-sized (e.g., 7-member) DAO finishes in under 28\,ms with about 2.8\,KB of communication, and scales linearly with DAO size.}
\end{abstract}

\begin{IEEEkeywords}
Blockchain, DAO, threshold signature, stealth address, distributed key derivation, privacy-preserving payments
\end{IEEEkeywords}

\section{Introduction}

Blockchain-based digital assets are increasingly managed by organizations rather than individual users. In practice, decentralized autonomous organizations (DAOs) \cite{wang2025understanding,tang2025decentralized}, custodians, consortium wallets, and project treasuries often operate under threshold authorization and multi-party key management rather than a single private key. This organizational shift raises a fundamental question:

\smallskip
\textit{How can one DAO transfer assets to another DAO while preserving recipient privacy and ensuring that the received funds remain under distributed control?}
\smallskip

Most existing payments are still designed for the \textit{single-user} setting. In Bitcoin and many mainstream public blockchains, transactions are publicly visible, and repeated use of addresses or related key material can reveal financial relationships among participants. Privacy-enhancing techniques such as stealth addresses, one-time addresses, and ring signatures help mitigate this leakage by preventing third parties from linking multiple incoming transactions to the same recipient.

These approaches usually place the scanning and spending keys with one user. A DAO instead distributes authorization and key material across several parties. Anonymous receipt must therefore keep both address generation and later spending under the DAO's threshold policy. In particular, converting a one-time address into a local private key  breaks that policy: the received output must already be represented by threshold shares, with no party learning the complete spending key.

A second challenge arises after funds are received. In UTXO-based systems such as Bitcoin, outputs are consumed once and fresh addresses or derived keys are routinely required for continued operation. Even in account-based systems, hierarchical derivation remains useful for wallet organization, audit separation, and policy-based asset management. BIP32-style hierarchical deterministic (HD) wallets solve this problem elegantly in the single-user setting, but they do not directly support a distributed environment in which no single party should reconstruct the master secret. Supporting realistic DAO-managed transactions therefore requires threshold signing, anonymous receiving, and \emph{distributed key derivation}, so that received assets remain inside a structured, threshold-controlled wallet.

Prior work studies threshold signatures~\cite{lindell2018fast,canetti2020uc,doerner2024threshold}, privacy-preserving payment addresses~\cite{vanSaberhagen2013cryptonote,wang2024dsag,pu2023stealth}, and hierarchical key derivation~\cite{bip32,zhong2023dkd} mostly as separate primitives. Current systems lack one transaction framework that lets a DAO receive through unlinkable one-time addresses, keep the funds under threshold control, and continue managing them within a derived wallet.

We address this gap by presenting a unified transaction framework, \textsc{Dao$^2$}, for organizational blockchain payments. Our framework supports two representative transaction types:
\begin{itemize}
    \item \textbf{DAO-to-DAO transactions}: the receiving DAO uses \emph{distributed key derivation} (DKD) to derive a non-stealth child address under threshold control, and the sending DAO threshold-signs a payment to that address. This mode applies to ordinary transfers on any blockchain (e.g., Bitcoin, Ethereum).
    \item \textbf{DAO-to-anonymous-DAO transactions}: the sender uses \emph{distributed stealth-address generation} (DSAG) with the receiver's threshold-held child key to produce an unlinkable one-time destination and then threshold-signs the payment. The receiver later detects the output, recovers one-time spending shares, and redeems the funds under threshold control. This mode applies when recipient privacy is required (e.g., Monero stealth payments; also achievable on Bitcoin/Ethereum via wallet-level support).
\end{itemize}
Among these, the second setting is the main technical focus because it requires recipient privacy, threshold-controlled redemption, and distributed derivation to work together within one coherent protocol.

The framework involves three core techniques. The first is \emph{distributed key derivation}, which adapts BIP32-style hierarchical derivation to a threshold setting: all DAO members locally update their key shares to derive a fresh child address without reconstructing the master secret. The resulting address is a standard (non-stealth) receiving address suitable for ordinary DAO-to-DAO transfers. The second is \emph{distributed stealth-address generation}, which further converts a threshold-held child public key into an unlinkable one-time destination through a distributed ECDH protocol. This produces a stealth address for anonymous receiving while preserving the ability to recover one-time spending shares under threshold control. The third is \emph{threshold signatures}, which serve as the authorization backbone for both outgoing payments and the redemption of received assets. The framework requires the threshold-signature shares to be linearly reconstructable discrete-log shares compatible with ECDH (as in Shamir-based threshold ECDSA); we evaluate the core with a 2-out-of-$n$ policy, motivated by real DAO deployments, where low-threshold settings such as 2-out-of-3 provide a practical balance between security and operational simplicity.

The modules connect through two additive updates. DKD adds the derivation offset \(\omega^{(k)}\) to every receiver share, and DSAG adds the stealth offset \(\rho^{(k)}\). Lagrange reconstruction then yields shares of \(b^{(k)}+\rho^{(k)}\), whose public key is the one-time destination \(D^{(k)}\). These shares enter threshold ECDSA directly, so neither the child secret nor the one-time spending key is reconstructed.

At a high level, the system operates as follows: (1)~each DAO runs a distributed key-generation or other compatible threshold-setup procedure to obtain private-key shares, public-key shares, and an aggregate public key; (2)~the receiver derives a child key, and the sender either pays that key directly or uses it to form a DSAG destination; (3)~the sender DAO threshold-signs the payment transaction and broadcasts it; (4)~blockchain consensus nodes confirm the transaction on chain; and (5)~the receiver detects the output, derives its one-time shares when DSAG is used, and later spends under threshold control. After successful redemption, the receiver keeps \((B^{(k)},cc^{(k)})\) as its next wallet state, allowing later receipts to continue along the same derivation lineage.

\textsc{Dao$^2$} combines anonymous receiving, distributed derivation, and threshold spending in one transaction flow for DAO-managed assets. If each side has one member, the threshold operations reduce to ordinary single-key operations and the same construction gives a user-level payment flow.

\smallskip
\noindent\textbf{Contributions.}
This paper makes the following contributions:
\begin{itemize}
    \item We formulate \emph{DAO-to-anonymous-DAO} transactions and present \textsc{Dao$^2$}, a unified payment framework (\S\ref{sec:system-model}--\S\ref{sec:framework}) that extends blockchain payments from single-user wallets to threshold-controlled organizations. Classical user-level payments arise as special cases when a DAO degenerates to a single user.

    \item We design an end-to-end protocol (\S\ref{sec:e2e-protocol}) with two core modules and a compatible authorization layer: \emph{distributed key derivation} for ordinary DAO-to-DAO receiving, \emph{distributed stealth-address generation} for anonymous receiving, and \emph{threshold signatures} for outgoing payments and redemption.

    \item We formally analyze security and evaluate a secp256k1 prototype (\S\ref{sec:security-analysis}, \S\ref{sec:implementation}, and Appendix~\ref{sec:security-proofs}). We prove transaction correctness, spending unforgeability, recipient indistinguishability, and Byzantine robustness. For a 7-member DAO, the core path takes under 28\,ms with about 2.8\,KB of counted public data.
\end{itemize}

\smallskip
\noindent\textbf{Relation to prior work.}
Threshold-signature systems provide distributed authorization but generally do not address anonymous receiving~\cite{lindell2018fast,canetti2020uc,zhong2023fast2n}. Stealth-address and anonymous-payment systems provide recipient privacy but typically assume single-user control~\cite{vanSaberhagen2013cryptonote,rivest2001ring,noether2016ringct,wang2024dsag}. HD wallets and distributed derivation support scalable key management but do not by themselves solve unlinkable organizational receiving~\cite{bip32,zhong2023dkd}. Our work targets the missing intersection of these three directions. Table~\ref{tab:comparison} summarizes the comparison.

\begin{table}[H]
\centering
\begin{threeparttable}
\caption{Comparison with prior work.}
\label{tab:comparison}
\footnotesize
\setlength{\tabcolsep}{3pt}
\renewcommand{\arraystretch}{1.15}
\begin{tabular}{rcccc}
\toprule
\multicolumn{1}{c}{\textbf{Approach}} & \textbf{TS} & \textbf{SA} & \textbf{DKD} & \textbf{E2E} \\
\midrule
Threshold ECDSA~\cite{lindell2018fast,doerner2024threshold} & \cmark & \xmark & \xmark & \xmark \\
\quad CryptoNote/RingCT~\cite{vanSaberhagen2013cryptonote,noether2016ringct} 
\quad  & \xmark & \cmark & \xmark & \xmark \\
DSAG~\cite{wang2024dsag} & \cmark & \cmark & \xmark & \xmark \\
BIP32 / Dist.\ KD~\cite{bip32,zhong2023dkd} & \xmark & \xmark & \cmark & \xmark \\
\cmidrule(rl){2-5}
\multicolumn{1}{c}{\textbf{\textsc{Dao$^2$}} (ours)} & \cmark & \cmark & \cmark & \cmark \\
\bottomrule
\end{tabular}
\begin{tablenotes}[flushleft]
\footnotesize
\item[] TS = threshold signing; SA = stealth address; DKD = distributed key derivation; E2E = end-to-end integration.
\end{tablenotes}
\end{threeparttable}
\end{table}

\smallskip
\noindent\textbf{Practical applications.}
Traditional blockchain payments focus on transfers between externally owned accounts or single-user wallets. Privacy-enhancing designs strengthen this model by hiding recipient identities. Our framework extends the model further to settings where both sender and receiver are governed by distributed decision-making and threshold-controlled cryptographic material. This makes it particularly suitable for DAO treasuries, consortium-controlled funds, custodial infrastructures, and other institutional digital-asset systems.

\section{Preliminaries}
\label{sec-preli}

We review the notation, stealth addresses, threshold signatures, and key derivation used in our construction.

\subsection{Basic Notation}

We work in an elliptic-curve group \( \mathbb{G} \) of prime order \( q \) with generator \( G \). Lowercase letters denote scalars in \( \mathbb{Z}_q \), and uppercase letters denote group elements. We write \( x \in_R \mathbb{Z}_q \) for uniform sampling, \( [n] = \{1,\ldots,n\} \), and \( \parallel \) for concatenation. Hash functions to \( \mathbb{Z}_q \) are denoted by \( H \), and BIP32-style derivation uses \(\text{HMAC-SHA512}\). For a qualified subset \( S \subseteq [n] \), let \( \lambda_{i,S} = \prod_{j \in S, j \neq i} \frac{j}{j-i} \) be the Lagrange coefficient associated with index \( i \in S \). When a scalar \( x \) is Shamir-shared among \( n \) parties with threshold \( t \), any subset \( S \) with \( |S| \ge t \) reconstructs \( x = \sum_{i \in S} \lambda_{i,S} x_i \). This linear relation is used repeatedly in our threshold-signing, distributed-stealth-address, and distributed-derivation components.

We use \(\mathsf{DLEQ.Prove}(x;G,X;P,Y)\) to prove that \(X=xG\) and \(Y=xP\) use the same scalar, and \(\mathsf{DLEQ.Verify}(G,X;P,Y;\pi)\) to verify the proof. Fiat--Shamir binds the proof to the session and the four group elements. We assume completeness, knowledge soundness, and zero knowledge in the random-oracle model.

\subsection{Stealth Addresses}

Stealth-address systems allow a recipient to publish long-term public information while receiving funds through unlinkable one-time addresses \cite{vanSaberhagen2013cryptonote,pu2023stealth}. In the standard construction, the recipient holds a scanning key pair \( (s, S=sG) \) and a spending key pair \( (s', S'=s'G) \). To send funds, the sender samples \( r \in_R \mathbb{Z}_q \), publishes \( R=rG \), and both parties derive the same Diffie-Hellman-style shared secret \( \sigma = H(rS) = H(sR) \). The one-time receiving address and its corresponding one-time secret key are then given by \( D = S' + \sigma G \) and \( d = s' + \sigma \pmod q \). The recipient scans candidate \( R \) values on chain, recomputes \( \sigma \), and tests whether the resulting \( D \) matches a transaction output. The main algebraic feature used later is that a stealth address is obtained by adding a hash-derived offset to the recipient's long-term spending key.

From the viewpoint of our framework, this primitive provides three relevant properties. First, it is \emph{correct}: the sender-side and receiver-side computations yield the same one-time key material. Second, it supports \emph{recipient unlinkability}, since external observers should not be able to associate the one-time address \( D \) with the recipient's long-term public key without learning the shared secret. Third, it preserves \emph{one-time spendability}: once the recipient recognizes the output, the corresponding secret key can be used exactly as an ordinary signing key for later redemption.

\subsection{Threshold Signatures}

A threshold signature splits a signing key among several parties but keeps one public verification key \cite{gennaro2007dkg,lindell2018fast,canetti2020uc,doerner2024threshold}. Party \(i\) holds a Shamir share \(x_i\) and public share \(X_i=x_iG\). Any qualified subset \(S\) satisfies \(x=\sum_{i\in S}\lambda_{i,S}x_i\) and \(PK=xG=\sum_{i\in S}\lambda_{i,S}X_i\). We use standard-output threshold ECDSA with an additive tweak: a public offset \(e\) changes each share to \(x_i+e\), each public share to \(X_i+eG\), and \(PK\) to \(PK'=PK+eG\). Public-key-dependent state is updated, and fresh preprocessing is created for \(PK'\). Key-independent setup may be reused, but nonces and presignatures are never reused across keys.

We require correctness, additive-tweak unforgeability, and robustness. Thus, a qualified subset can sign, while a sub-threshold set cannot sign under the base key or an honestly derived key.

\subsection{BIP32-Style Key Derivation}

BIP32 organizes wallet material as extended keys consisting of a secret or public key together with a chain code \cite{bip32,zhong2023dkd}. The non-hardened derivation relation is the one most relevant to our distributed construction. Given a parent extended key \( (x_{\mathrm{par}}, X_{\mathrm{par}}, cc_{\mathrm{par}}) \) and child index \( i \), one computes \( I = \text{HMAC-SHA512}(cc_{\mathrm{par}}, X_{\mathrm{par}} \parallel i) \), parses \( I = \omega \parallel cc_{\mathrm{child}} \), and derives \( x_{\mathrm{child}} = x_{\mathrm{par}} + \omega \pmod q \) together with \( X_{\mathrm{child}} = X_{\mathrm{par}} + \omega G \). Hence, child keys are obtained by adding a deterministically derived offset \( \omega \) to the parent key. Our distributed derivation protocol preserves exactly this additive parent-child relation while ensuring that the parent secret is never reconstructed by any party.

For the present work, the important properties are \emph{deterministic consistency}, namely that all honest parties derive the same offset and chain code from the same public derivation state; \emph{public-key consistency}, namely that private-key and public-key derivation remain algebraically aligned through the relation above; and \emph{hierarchical wallet continuity}, namely that freshly received assets can remain inside an organized derivation tree rather than being treated as isolated ad hoc keys.

\section{Threat Model \& Security Goals}

We define the participating parties, adversary, and security goals for DAO-to-DAO transfers.

\subsection{System Model}
\label{sec:system-model}

We consider two organizations, i.e., \( \mathsf{DAO}_1 \) and \( \mathsf{DAO}_2 \), with participant sets \( \{P^{(1)}_1,\ldots,P^{(1)}_{n_1}\} \) and \( \{P^{(2)}_1,\ldots,P^{(2)}_{n_2}\} \), respectively. System setup publishes public parameters \( \mathsf{pp} = (\mathbb{G}, q, G, H, \text{HMAC-SHA512}) \). The framework uses two protocol modules, distributed key derivation (\S\ref{sec-dkd1}) and distributed stealth-address generation (\S\ref{sec-dsag}), on top of a threshold-authorization primitive instantiated in \S\ref{sec-2N}. We write \( t_1 \) and \( t_2 \) for the sender-side and receiver-side authorization thresholds, respectively; our core experiments use \(t_1=t_2=2\).

At the sender side, a qualified subset \( S_1 \subseteq [n_1] \) jointly derives a stealth output for the receiver. At the receiver side, a qualified subset \( S_2 \subseteq [n_2] \) detects the output, reconstructs one-time signing capability in distributed form, and later authorizes spending. Each address-generation or recovery subset must meet its DAO's threshold.

Private state consists of secret shares, local randomness, and locally maintained derivation state. For anonymous transfers, we distinguish between the sender-visible session descriptor
\[
\delta^{(k)} = \bigl(B^{(k)}, cc^{(k)}, \mathsf{id}^{(k)}\bigr)
\]
and the public chain transcript
\[
\tau_{\mathrm{chain}}^{(k)} = \bigl(m_{\mathrm{pay}}^{(k)}, \sigma_{\mathrm{pay}}^{(k)}, D^{(k)}, \mathsf{id}^{(k)}, \xi^{(k)}\bigr).
\]
The descriptor \( \delta^{(k)} \) is the receiver-side coordination data needed by the designated sender-side subset to instantiate the transfer, whereas the privacy notion below is defined with respect to the publicly visible transcript \( \tau_{\mathrm{chain}}^{(k)} \). Off-chain coordination messages are not part of the public challenge view unless revealed through corruption.

\smallskip
\subsection{Adversary Model}
\label{sec-threat-model}

We consider a probabilistic polynomial-time adversary \(\mathcal{A}\) that chooses the corrupted participants in each DAO before setup. This set remains fixed throughout the execution. The adversary learns their full local states and controls all their messages.

We assume a \emph{sub-threshold adversary}: the fixed corrupted sets satisfy \(|C_1|<t_1\) and \(|C_2|<t_2\) over the entire key lineage, including all derived keys. For a 2-out-of-\(n\) policy, at most one member of each DAO is corrupted.

The adversary may delay messages, abort corrupted parties, send malformed shares, and propose inconsistent public values. We assume authenticated confidential channels for off-chain session data and reliable broadcast within each participating subset. Before a session starts, honest parties agree on its identifier, parent state, derivation tag, and participant sets. A shared chain-finality rule determines when a transaction is confirmed. Robustness concerns accepted outputs and visible aborts; it does not guarantee completion under arbitrary delays.

\smallskip
\noindent\textbf{Cryptographic assumptions.}
We assume DL, CDH, and DDH hardness in \(\mathbb{G}\). For the privacy proof, the stealth hash \(H\) and the ECDSA message hash \(H_{\mathrm{sig}}\) are independent random oracles; implementations should separate their input domains. DLEQ proofs are knowledge-sound and zero knowledge, optional commitments are binding, and threshold setup produces consistent shares. Spending security and signing robustness are conditional on the full interface requirement in \S\ref{sec-2N}, with fresh per-key preprocessing.
\subsection{Security Goals}
\label{sec:security-goals}

Against any PPT adversary \( \mathcal{A} \) in the model above, our framework is intended to satisfy the following security goals.

\begin{defi}[Transaction correctness]
The framework satisfies \emph{transaction correctness} if an honest execution with agreed inputs, qualified subsets, and delivered messages produces matching one-time keys, a valid redemption signature, and a consistent successor state. Writing \(D\) for the output and \(\{d_j\}_{j\in S_2}\) for the recovered shares, we require
\begin{multline*}
\Pr\!\Big[
D \neq \Big( \sum_{j \in S_2} \lambda_{j,S_2} d_j \Big) G
\ \lor\
\mathsf{TS.Verify}(D,m_{\mathrm{spend}},\sigma_{\mathrm{spend}})=0 \\
\quad \lor\ \mathsf{StateInconsistent}=1
\Big]
\leq \mathsf{negl}(\kappa),
\end{multline*}
where \( \mathsf{StateInconsistent} \) flags inconsistency in the post-transaction derivation state accepted by honest parties.
\end{defi}

\begin{defi}[Spending unforgeability]
The framework satisfies \emph{spending unforgeability} if a PPT adversary with the fixed sub-threshold corrupted sets above has negligible probability of winning the following experiment. It sees the public transcript, the corrupted parties' local views, and the results of authorized signing sessions, then outputs
\begin{itemize}
    \item a valid signature on a spending message not previously authorized by a qualified subset under the targeted honest key, or
    \item a secret \( x^* \) such that \( x^* G = PK \) for an honestly generated parent signing key \( PK \) or its derived one-time descendant.
\end{itemize}
Formally, if \( \mathsf{Exp}^{\mathrm{spend\text{-}uf}}_{\mathcal{A}}(1^\kappa)=1 \) denotes the event above, then
\[
\Pr\!\left[\mathsf{Exp}^{\mathrm{spend\text{-}uf}}_{\mathcal{A}}(1^\kappa)=1\right] \leq \mathsf{negl}(\kappa).
\]
\end{defi}

\begin{defi}[Recipient indistinguishability]
The framework satisfies \emph{recipient indistinguishability} if no PPT adversary can distinguish the intended receiver of the public chain transcript with more than negligible advantage. This experiment concerns an external observer: it receives no corrupted state, session DH value, or off-chain record identifying the selected receiver. The sender key is honestly generated and its secret remains hidden. Consider the experiment \( \mathsf{Exp}^{\mathrm{rec\text{-}ind}}_{\mathcal{A}}(1^\kappa) \):
\begin{enumerate}
    \item {\(\mathcal{A}\) selects two registered receivers. Their keys were generated honestly. The challenger gives \(\mathcal{A}\) the public parent states
    \( st_c^{\mathrm{pub}} = (B_c^{(k-1)}, cc_c^{(k-1)}) \) for \(c\in\{0,1\}\), but keeps the secret keys and shares. The two roots are generated independently, and the adversary cannot choose their key material or corrupt parties in this experiment.
    }
    \item {The challenger samples \( b \in \{0,1\} \) and fresh labels \( \mathsf{id}^{(k)} \) and \( \xi^{(k)} \). It derives the child state for receiver \(b\), runs the transfer with honest qualified subsets, and returns only
    \[
    \tau_{\mathrm{chain},b}^{(k)} =
    \bigl(m_{\mathrm{pay},b}^{(k)}, \sigma_{\mathrm{pay},b}^{(k)}, D_b^{(k)}, \mathsf{id}^{(k)}, \xi^{(k)}\bigr).
    \]
    The observer can derive both candidate descriptors from the public states and tags, but receives no off-chain record identifying which descriptor was selected. All other payment fields are fixed identically in both cases. The signature has the ordinary randomized-ECDSA distribution with an independent fresh nonce.
    }
    \item \( \mathcal{A} \) outputs a guess \( b' \).
\end{enumerate}
The framework is recipient-indistinguishable if
\[
\left|
\Pr[b'=b] - \frac{1}{2}
\right|
\leq \mathsf{negl}(\kappa).
\]
For unlinkability, compare two payments from the same honest sender to one challenge lineage with two payments to the two independent challenge lineages. Labels are freshly sampled, and all non-recipient payment fields are identical across the experiments. The observer receives only the corresponding chain transcripts and public root states. We require negligible distinguishing advantage, also over polynomially many such payments.
\end{defi}

\smallskip
\noindent\textit{Remark (scope of the privacy model).}
\qw{Definition~3 uses honestly generated receiver keys and honest qualified subsets on both sides.}
In particular, if the adversary has corrupted a sender-side participant who took part in the actual DSAG computation, that participant already knows the receiver's child public key \( B^{(k)} \) and the session descriptor \( \delta^{(k)} \), so receiver privacy cannot hold against such a party.
This is inherent to stealth-address-type constructions: a sender necessarily learns which receiver it is paying.
The guarantee is limited to external chain observers. An earlier participant who retained a descriptor and DH value is also excluded: on the same receiver lineage and under the same sender key, \(\Omega^{(k+1)}=\Omega^{(k)}+\omega^{(k+1)}A\) lets it recognize later destinations. Fresh labels do not prevent this use of a known DH value.

\begin{defi}[Byzantine robustness]
The framework satisfies \emph{Byzantine robustness} if, for every PPT adversary \( \mathcal{A} \) controlling a sub-threshold set of parties, the probability that honest parties accept malformed protocol data or end in an undetected inconsistent post-transaction state is negligible. Concretely, let \( \mathsf{Exp}^{\mathrm{rob}}_{\mathcal{A}}(1^\kappa)=1 \) if either
\begin{itemize}
    \item honest parties accept invalid shares or inconsistent public derivation data;
    \item two honest parties accept different successor states for the same agreed session and confirmed transaction.
\end{itemize}
A failed check or signing failure causes a visible abort; lack of enough responses causes no state commit. Neither event is itself a robustness violation. Then the framework is robust if
\[
\Pr\!\left[\mathsf{Exp}^{\mathrm{rob}}_{\mathcal{A}}(1^\kappa)=1\right] \leq \mathsf{negl}(\kappa).
\]
\end{defi}

\section{Construction}
\label{sec:framework}

We present the system flow and the three cryptographic components used by \textsc{Dao$^2$}.

\subsection{Overview}

Figure~\ref{fig:zk-dao} illustrates \textsc{Dao$^2$}, our framework for privacy-preserving transfers between threshold-controlled organizations. The sender side retains standard threshold authorization, while the receiver side combines distributed key derivation, stealth-address generation, and one-time threshold redemption so that incoming funds remain both unlinkable and organizationally controlled.


\begin{figure}[t!]
\centering
\includegraphics[width=0.99\linewidth]{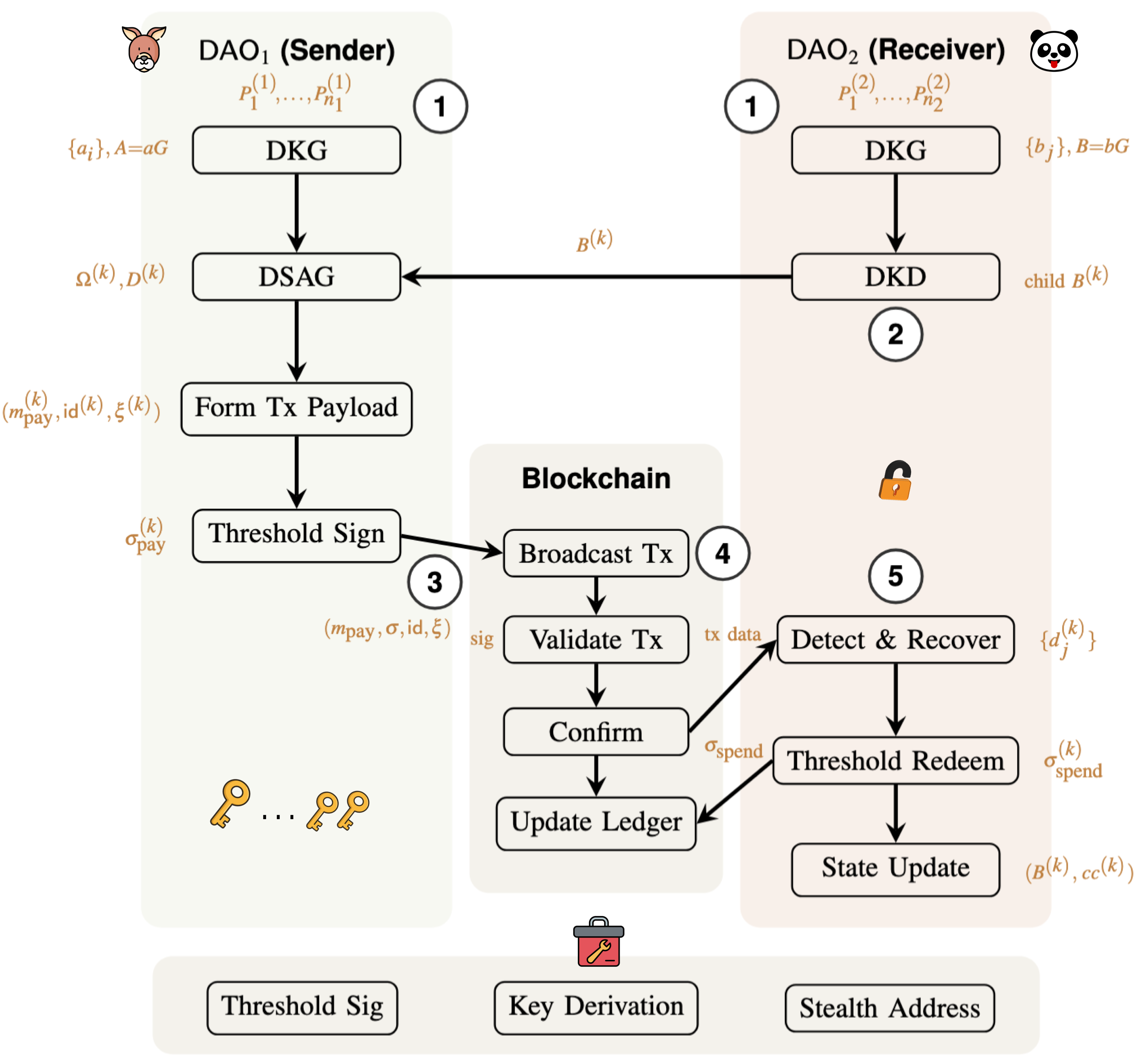}
\caption{\textsc{Dao$^2$} framework.
\ding{192}~DKG distributes key shares to each DAO.
\ding{193}~The receiver derives a child key (DKD); for anonymous
transfers the sender also generates a stealth destination (DSAG).
\ding{194}~The sender threshold-signs the payment.
\ding{195}~Blockchain validates and confirms the transaction.
\ding{196}~The receiver detects the output, recovers one-time shares,
redeems via threshold signing, and updates its state.}
\label{fig:zk-dao}
\end{figure}

At the construction level, \textsc{Dao$^2$} consists of two protocol modules and one underlying authorization primitive. The first module is distributed key derivation, which maintains a threshold-held extended-key state for the receiver DAO. The second is distributed stealth-address generation, which converts a derived receiver public key into an unlinkable one-time destination while preserving compatibility with later threshold spending. These two modules are used together with a compatible threshold-ECDSA interface, which serves as the authorization primitive for both outgoing payments and later redemption. \S\ref{sec:e2e-protocol} explains how these components are composed into one end-to-end transaction flow.

Concretely, \( \mathsf{DAO}_1 \) holds local threshold-signing states \(\{st_i^A\}_{i\in[n_1]}\), whose scalar components are shares \( \{a_i\}_{i \in [n_1]} \), together with public shares \(A_i=a_iG\) under public key \( A = aG \). The receiver \( \mathsf{DAO}_2 \) maintains a threshold-held derivation state
\[
\bigl(\{b_j^{(k)}\}_{j \in [n_2]},\; \{B_j^{(k)}\}_{j \in [n_2]},\; B^{(k)},\; cc^{(k)}\bigr),
\]
where \( B^{(k)} = b^{(k)}G \) is the current public key and \( cc^{(k)} \) is the current chain code. For each incoming transfer, the system advances from a parent state \( (B^{(k-1)}, cc^{(k-1)}) \) to a child state \( (B^{(k)}, cc^{(k)}) \), and the anonymous output is created relative to that child public key.

\subsection{Distributed Key Derivation}
\label{sec-dkd1}

The key-derivation module adapts BIP32-style non-hardened derivation to a threshold-controlled wallet setting \cite{bip32,zhong2023dkd}. Starting from a parent state
\[
\bigl(\{b_j^{(k-1)}\}_{j \in [n_2]},\; \{B_j^{(k-1)}\}_{j \in [n_2]},\; B^{(k-1)},\; cc^{(k-1)}\bigr),
\]
the parties derive a child state
\[
\bigl(\{b_j^{(k)}\},\; \{B_j^{(k)}\},\; B^{(k)},\; cc^{(k)}\bigr)
\]
for a fresh public derivation tag \( \mathsf{id}^{(k)} \), without reconstructing the parent secret \( b^{(k-1)} \).

For each derivation step \(k\), the parties compute
\[
(\omega^{(k)}, cc^{(k)}) = \text{HMAC-SHA512}\!\left(cc^{(k-1)}, B^{(k-1)} \parallel \mathsf{id}^{(k)}\right),
\]
interpret the first half as a scalar offset \( \omega^{(k)} \in \mathbb{Z}_q \), and perform the local additive update
\[
b_j^{(k)} = b_j^{(k-1)} + \omega^{(k)} \pmod q, \qquad
B_j^{(k)} = B_j^{(k-1)} + \omega^{(k)}G.
\]
Consequently, the aggregate child public key satisfies
\[
B^{(k)} = B^{(k-1)} + \omega^{(k)}G.
\]
This additive relation is the only DKD rule needed by the later transaction flow. It lets the sender side derive the child public key from the receiver-shared session descriptor, while only the receiver side can derive the matching child secret shares. If an implementation additionally wants proactive rerandomization after derivation, it may run any compatible share-refresh procedure \cite{gennaro2007dkg,kate2024nivss}; such maintenance is orthogonal to the core construction here.

\subsection{Distributed Stealth Address Generation}
\label{sec-dsag}

The DSAG module turns a receiver child public key into an unlinkable one-time destination that can still be redeemed under threshold control. Given a qualified sender-side subset \( S_1 \) holding shares \( \{a_i\}_{i \in S_1} \) of \( \mathsf{DAO}_1 \)'s signing secret and a receiver child public key \( B^{(k)} \), the module outputs a one-time public key \( D^{(k)} \) together with public metadata that enables \( \mathsf{DAO}_2 \) to recover the corresponding one-time secret shares in distributed form \cite{vanSaberhagen2013cryptonote,wang2024dsag,pu2023stealth}.

Each participating sender computes
\[
\Omega_i^{(k)} = a_i B^{(k)}.
\]
\qw{Each sender publishes this term with a DLEQ proof for \((G,A_i)\) and \((B^{(k)},\Omega_i^{(k)})\). Only verified terms are aggregated.}
By Lagrange aggregation, the distributed shared secret is
\[
\Omega^{(k)} = \sum_{i \in S_1}\lambda_{i,S_1}\Omega_i^{(k)} = aB^{(k)}.
\]
The senders then sample a fresh public label \( \xi^{(k)} \), define
\[
\rho^{(k)} = H(\Omega^{(k)} \parallel \xi^{(k)}), \qquad
D^{(k)} = B^{(k)} + \rho^{(k)}G,
\]
and publish the metadata \( (\mathsf{id}^{(k)}, \xi^{(k)}) \).

On the receiver side, a qualified subset \( S_2 \) reconstructs the same child state through \S\ref{sec-dkd1} and computes
\[
\Omega_j'^{(k)} = b_j^{(k)}A, \qquad
\Omega'^{(k)} = \sum_{j \in S_2}\lambda_{j,S_2}\Omega_j'^{(k)} = b^{(k)}A = \Omega^{(k)}.
\]
\qw{Each \(\Omega_j'^{(k)}\) is accepted only with a DLEQ proof for \((G,B_j^{(k)})\) and \((A,\Omega_j'^{(k)})\).}
Hence the receivers obtain the same offset \( \rho^{(k)} = H(\Omega'^{(k)} \parallel \xi^{(k)}) \) and derive one-time secret shares
\[
d_j^{(k)} = b_j^{(k)} + \rho^{(k)}, \qquad
D_j^{(k)} = B_j^{(k)} + \rho^{(k)}G.
\]
Because the same scalar \( \rho^{(k)} \) is added to every receiver share,
\[
\left(\sum_{j \in S_2}\lambda_{j,S_2} d_j^{(k)}\right)G = D^{(k)}.
\]
This is the only DSAG relation needed later. In particular, the module outputs ordinary Shamir-style shares \( \{d_j^{(k)}\} \) of the one-time spending secret, which are used through the compatible signing-state interface in \S\ref{sec-2N}.

\subsection{Threshold-ECDSA Interface}
\label{sec-2N}

\begin{revision}
The threshold-signature layer provides setup, public-key update, signing, and verification. Setup returns one local state per party and the aggregate public key:
\[
(\{st_i\}_{i\in[n]},X)\leftarrow\mathsf{TS.KeyGen}(1^\kappa,n,t),
\qquad X=xG.
\]
Here, \(st_i\) contains party \(i\)'s scalar share \(x_i\), public share \(X_i=x_iG\), and scheme-specific auxiliary state. For a public offset \(e\), each party applies the same additive update,
\[
x_i'=x_i+e,\qquad X_i'=X_i+eG,\qquad X'=X+eG,
\]
and obtains the corresponding local state as
\[
st_i'\leftarrow\mathsf{TS.Tweak}(st_i,e,X').
\]
A qualified subset \(S\) can then sign, and anyone can verify the result:
\[
\begin{aligned}
\sigma&\leftarrow\mathsf{TS.Sign}_S(m,\{st_i'\}_{i\in S}),\\
v&\leftarrow\mathsf{TS.Verify}(X',m,\sigma)\in\{0,1\}.
\end{aligned}
\]
Threshold ECDSA protocols provide the signing layer \cite{lindell2018fast,canetti2020uc,zhong2023fast2n,doerner2024threshold}. Our construction requires an adapter that updates all public-key-dependent auxiliary state when applying a tweak. Before signing, the parties create fresh preprocessing for \(X'\); nonces and presignatures are never reused across keys. Honest signatures must have the ordinary randomized-ECDSA output distribution.

\noindent\textit{Interface requirement.}
Against fixed sub-threshold corrupted sets, the adapted interface must remain correct and unforgeable across polynomially many public additive tweaks and authorized signing sessions, and must reject invalid contributions or abort visibly. Its security must cover the joint view of public shares, corrupted local states, signing transcripts, and the DH terms and DLEQ proofs disclosed to corrupted participants by DSAG. Public offsets and known session DH values need not remain secret. A forgery is a signature on a message not authorized by a qualified subset under the targeted base or derived key. This requirement is stronger than ordinary single-key ECDSA unforgeability.

Our spending-security and signing-robustness results are conditional on this interface requirement. The cited schemes motivate the signing layer; this paper does not implement or prove a concrete auxiliary-state adapter. The prototype checks standard-ECDSA output compatibility with threshold parameter \(t=2\).

\textsc{Dao$^2$} uses this primitive twice. First, a qualified subset of \(\mathsf{DAO}_1\) signs the payment under \(A\). Second, after the receiver recovers shares \(\{d_j^{(k)}\}\), each party applies \(\mathsf{TS.Tweak}\) with \(e=\rho^{(k)}\), and a qualified subset signs with fresh preprocessing for \(D^{(k)}\). On chain, \(D^{(k)}\) is an ordinary ECDSA key; the threshold protocol remains off chain.
\end{revision}

\section{End-to-End Protocol}
\label{sec:e2e-protocol}

We in this section describe how one transfer from \( \mathsf{DAO}_1 \) to \( \mathsf{DAO}_2 \) is executed. Figure~\ref{fig:dao2-protocol} gives a detailed protocol-level view of the anonymous transfer path.

For the \(k\)-th transfer, let \( \mathsf{id}^{(k)} \) denote a fresh public derivation tag, let \( S_1 \subseteq [n_1] \) and \( S_2 \subseteq [n_2] \) denote the qualified sender-side and receiver-side subsets participating in the session, and let \( T_1 \subseteq [n_1] \) and \( T_2 \subseteq S_2 \) denote the subsets used for outgoing authorization and redemption, with \( |T_1| \ge t_1 \) and \( |T_2| \ge t_2 \).

We assume throughout this section that the long-term threshold public key \(A\) of \( \mathsf{DAO}_1 \) and the current receiver-side parent state \( (B^{(k-1)}, cc^{(k-1)}) \) of \( \mathsf{DAO}_2 \) have already been established. 
In ordinary mode, the sender uses \(B^{(k)}\) directly as the receiving key: Step~1.3 still authorizes the payment, while the DSAG generation, detection, and one-time-share recovery steps are skipped. After confirmation, the receiver commits the child state and later signs under \(B^{(k)}\). Derivations on one receiver lineage are serialized: index \(k+1\) is allocated only after transaction \(k\) is committed or abandoned; parallel receipts use separate derivation branches.

The agreed session inputs and communication assumptions are those of \S\ref{sec-threat-model}. Parties bind every proof and signing request to that session and the same transaction record. A failed check leaves the persistent parent state unchanged. A successor state is committed only after the relevant transaction is confirmed under the shared finality rule; in anonymous mode, this is the redemption transaction.

\begin{figure*}[!t]
\centering
\tcbset{
  protocolstep/.style={
    enhanced,
    width=\linewidth,
    colback=white,
    frame hidden,
    boxrule=0pt,
    borderline={0.45pt}{0pt}{protocolteal,dashed},
    sharp corners,
    boxsep=0pt,
    left=5pt,
    right=5pt,
    top=3pt,
    bottom=3pt,
    before skip=2pt,
    after skip=2pt
  }
}
\newtcolorbox{protocolstepbox}{protocolstep}
\begin{mdframed}
\scriptsize
\setlength{\tabcolsep}{0pt}
\renewcommand{\arraystretch}{0.78}
\setlength{\abovedisplayskip}{0.5pt}
\setlength{\belowdisplayskip}{0.5pt}
\setlength{\abovedisplayshortskip}{0pt}
\setlength{\belowdisplayshortskip}{0pt}
\setlength{\parskip}{0pt}
\hspace*{0.018\textwidth}%
\begin{tabular}{@{}p{0.49\textwidth}@{\hspace{0.02\textwidth}}p{0.46\textwidth}@{}}
\noindent\textbf{\textsf{Phase I: Transaction Generation}} & \noindent\textbf{\textsf{Phase II: Recovery and Redemption}} \\

\smallskip
\noindent\underline{Step 1.1: Receiver-Side Child-Key Allocation}
The receiver derives the transaction-specific session descriptor from the current parent state \( (B^{(k-1)}, cc^{(k-1)}) \):
\smallskip

\begin{protocolstepbox}
\begin{enumerate}[nosep,leftmargin=*]
\renewcommand{\labelenumi}{(\alph{enumi})}
\item \textit{Derivation input:}
Use the current receiver-side parent state \( (B^{(k-1)}, cc^{(k-1)}) \) and a fresh tag \( \mathsf{id}^{(k)} \).
\item \textit{Offset derivation:}
\[
(\omega^{(k)}, cc^{(k)}) = \text{HMAC-SHA512}\!\left(cc^{(k-1)}, B^{(k-1)} \parallel \mathsf{id}^{(k)}\right)
\]
where the first 256 bits give the scalar offset $\omega^{(k)} \in \mathbb{Z}_q$ and the last 256 bits give the updated chaincode $cc^{(k)}$.
\item \textit{Child public key:}
\[
B^{(k)} = B^{(k-1)} + \omega^{(k)}G
\]
The additive structure ensures that each participant can locally derive the matching child secret share $b_j^{(k)} = b_j^{(k-1)} + \omega^{(k)}$ without reconstructing the parent secret.
\qw{Each receiver also sets \(st_j^{(k)}\leftarrow\mathsf{TS.Tweak}(st_j^{(k-1)},\omega^{(k)},B^{(k)})\). No preprocessing is created for this intermediate key.}
\item \textit{Session descriptor:}
Share \( (B^{(k)}, cc^{(k)}, \mathsf{id}^{(k)}) \) with the designated sender-side participants for the current transaction. This tuple is sufficient to instantiate the transfer and does not leak any receiver secret share.
\end{enumerate}
\end{protocolstepbox}
&

\smallskip
\noindent\underline{Step 2.1: Output Detection and Ownership Check.}
Upon observing a transaction carrying \( (D^{(k)}, \mathsf{id}^{(k)}, \xi^{(k)}) \), the receiver-side subset $S_2$ tests whether the output belongs to $\mathsf{DAO}_2$:
\smallskip

\begin{protocolstepbox}
\begin{enumerate}[nosep,leftmargin=*]
\renewcommand{\labelenumi}{(\alph{enumi})}
\item \begin{revision}\textit{Child state:}
Each participant derives \(\omega^{(k)}\) as in Step~1.1 and updates
\[
b_j^{(k)} = b_j^{(k-1)} + \omega^{(k)}, \qquad
B_j^{(k)} = B_j^{(k-1)} + \omega^{(k)}G
\]
\end{revision}
\item
\begin{revision}
\textit{Receiver-side shared secret:}
Each participant publishes \(\Omega_j'^{(k)}=b_j^{(k)}A\) with a DLEQ proof for \((G,B_j^{(k)})\) and \((A,\Omega_j'^{(k)})\). Only verified terms are aggregated:
\[
\Omega'^{(k)} = \sum_{j \in S_2}\lambda_{j,S_2}\Omega_j'^{(k)}
= b^{(k)}A = \Omega^{(k)}
\]
The last equality follows from ECDH.
\end{revision}
\item \textit{Candidate address test:}
\[
D'^{(k)} = B^{(k)} + H(\Omega'^{(k)} \parallel \xi^{(k)})G
\]
Accept the output only if \( D'^{(k)} = D^{(k)} \). If the test fails, the transaction is not addressed to this DAO and is silently skipped.
\item \textit{Detection record:}
On success, retain \( (\mathsf{id}^{(k)},\xi^{(k)},D^{(k)}) \) together with the verified \( (B^{(k)},cc^{(k)},\Omega'^{(k)}) \) as the pending redemption record passed to Step~2.2. The persistent wallet state is not advanced at this point.
\end{enumerate}
\end{protocolstepbox}
\\

\smallskip
\noindent\underline{Step 1.2: Distributed Stealth-Destination Generation.} The sender-side subset \( S_1 \) creates the one-time unlinkable destination for $\mathsf{DAO}_2$:
\smallskip

\begin{protocolstepbox}
\begin{enumerate}[nosep,leftmargin=*]
\renewcommand{\labelenumi}{(\alph{enumi})}
\item \textit{Local Diffie--Hellman terms:}
Each participating sender computes a partial DH term:
\[
\Omega_i^{(k)} = a_i B^{(k)} \qquad (i \in S_1)
\]
\item
\begin{revision}
\textit{Partial-DH consistency:}
Each sender publishes \(\Omega_i^{(k)}\) with a DLEQ proof for \((G,A_i)\) and \((B^{(k)},\Omega_i^{(k)})\). Only verified terms are aggregated. An optional commit-open round may be added before publication.
\end{revision}
\item \textit{Distributed secret aggregation:}
\[
\Omega^{(k)} = \sum_{i \in S_1}\lambda_{i,S_1} \Omega_i^{(k)} = aB^{(k)}
\]
\item \textit{Offset and one-time key:}
Sample a fresh public label $\xi^{(k)} \in_R \{0,1\}^{256}$ and compute:
\[
\rho^{(k)} = H(\Omega^{(k)} \parallel \xi^{(k)}), \qquad
D^{(k)} = B^{(k)} + \rho^{(k)}G
\]
\item \textit{Metadata formation:}
Attach \( (\mathsf{id}^{(k)}, \xi^{(k)}) \) to the payment transaction. $\mathsf{id}^{(k)}$ enables the receiver to locate the correct child key; $\xi^{(k)}$ enables reconstruction of $\rho^{(k)}$.
\end{enumerate}
\end{protocolstepbox}
&
\smallskip
\noindent\underline{Step 2.2: Distributed One-Time Share Recovery.}
\begin{revision}After detection, the receiver derives one-time shares without reconstructing the spending key:\end{revision}
\smallskip
\begin{protocolstepbox}
\begin{enumerate}[nosep,leftmargin=*]
\renewcommand{\labelenumi}{(\alph{enumi})}
\item \begin{revision}\textit{Offset:}\end{revision}
\[
\rho^{(k)} = H(\Omega'^{(k)} \parallel \xi^{(k)})
\]
\begin{revision}Since Step~2.1 establishes \(\Omega'^{(k)}=\Omega^{(k)}\), every honest receiver derives the same offset used by the sender.\end{revision}
\item \begin{revision}\textit{One-time shares:}\end{revision}
\[
d_j^{(k)} = b_j^{(k)} + \rho^{(k)}
\]
\begin{revision}Thus \(\sum_j\lambda_jd_j^{(k)}=b^{(k)}+\rho^{(k)}\), while no participant reconstructs the aggregate spending key.\end{revision}
\item \begin{revision}\textit{Public shares:}\end{revision}
\[
D_j^{(k)} = d_j^{(k)}G = B_j^{(k)} + \rho^{(k)}G
\]
\begin{revision}Each party publishes \(D_j^{(k)}\).\end{revision}
\item
\begin{revision}
\textit{Checks:}
Verify every \(D_j^{(k)}\stackrel{?}{=}B_j^{(k)}+\rho^{(k)}G\), then verify
\(D^{(k)}\stackrel{?}{=}\sum_{j\in S_2}\lambda_{j,S_2}D_j^{(k)}\). Reject any inconsistent share. Continue only if the verified shares still form a qualified subset; otherwise abort without changing the persistent state.
\end{revision}
\end{enumerate}
\end{protocolstepbox}
\\

\smallskip
\noindent\underline{Step 1.3: Outgoing Threshold Authorization.}
The sender-side transaction is finalized under the long-term threshold key \(A\):
\smallskip

\begin{protocolstepbox}
\begin{enumerate}[nosep,leftmargin=*]
\renewcommand{\labelenumi}{(\alph{enumi})}
\item \textit{Signer selection:}
Choose a qualified subset \( T_1 \subseteq [n_1] \) with $|T_1| \geq t_1$.
\item \textit{Signing input:}
Form the payment transaction \( m_{\mathrm{pay}}^{(k)} \) whose recipient field is the one-time key \( D^{(k)} \).
\item \textit{Threshold signing:}
Run the compatible threshold-ECDSA interface under public key \( A \):
\[
\sigma_{\mathrm{pay}}^{(k)} \!\leftarrow\! \mathsf{TS.Sign}_{T_1}\!\bigl(m_{\mathrm{pay}}^{(k)},\, \{st_i^A\}_{i \in T_1}\bigr)
\]
The signing protocol involves ephemeral nonce generation, partial-signature computation, and Lagrange-based aggregation (cf.\ \S\ref{sec-2N}).
\item \begin{revision}\textit{Freshness:}
Use fresh nonce material for every payment and never reuse a presignature.\end{revision}
\item \textit{Submission:}
Broadcast \( (m_{\mathrm{pay}}^{(k)}, \sigma_{\mathrm{pay}}^{(k)}, \mathsf{id}^{(k)}, \xi^{(k)}) \) to the blockchain network. The chain validates $\sigma_{\mathrm{pay}}^{(k)}$ against the public key $A$ without learning the threshold structure. After confirmation, erase \(\Omega_i^{(k)}\), \(\Omega^{(k)}\), and \(\rho^{(k)}\).
\end{enumerate}
\end{protocolstepbox}
&

\smallskip
\noindent\underline{Step 2.3: Threshold Redemption.} The receiver signs and updates its state:
\smallskip

\begin{protocolstepbox}
\begin{enumerate}[nosep,leftmargin=*]
\renewcommand{\labelenumi}{(\alph{enumi})}
\item \begin{revision}\textit{Signing input:}
Choose \(T_2\subseteq S_2\), \(|T_2|\geq t_2\), and form \(m_{\mathrm{spend}}^{(k)}\) under \(D^{(k)}\).\end{revision}
\item
\begin{revision}
\textit{Derived ECDSA state and redemption:}
Each party computes \(st_j^{D,(k)}\leftarrow\mathsf{TS.Tweak}(st_j^{(k)},\rho^{(k)},D^{(k)})\) and creates fresh preprocessing for \(D^{(k)}\). A qualified subset then signs:
\[
\sigma_{\mathrm{spend}}^{(k)} \!\leftarrow\! \mathsf{TS.Sign}_{T_2}\!\bigl(m_{\mathrm{spend}}^{(k)},\, \{st_j^{D,(k)}\}_{j \in T_2}\bigr).
\]
No nonce or presignature is reused across keys. On chain, \(D^{(k)}\) is an ordinary ECDSA key.
\end{revision}
\item \textit{Broadcast:}
Submit $(m_{\mathrm{spend}}^{(k)}, \sigma_{\mathrm{spend}}^{(k)})$ to the blockchain.
\item \textit{State update:}
After redemption is confirmed, store \((st_j^{(k)},B^{(k)},cc^{(k)})\) as the new state and mark \(\mathsf{id}^{(k)}\) as used.
\item \textit{Key erasure:}
Erase \(st_j^{D,(k)}\), its one-time share and nonce material, \(\rho^{(k)}\), \(\Omega_j'^{(k)}\), \(\Omega'^{(k)}\), and secret copies in the pending record.
\end{enumerate}
\end{protocolstepbox}
\end{tabular}
\end{mdframed}

\caption{End-to-end \textsc{Dao$^2$} transaction.}
\label{fig:dao2-protocol}
\vspace{-0.05in}
\end{figure*}

\subsection{Transaction-Generation Phase}

The first phase prepares and authorizes the payment.

\paragraph{\underline{Step 1.1}: Child-key allocation.}
Starting from the current receiver-side parent state \( (B^{(k-1)}, cc^{(k-1)}) \), the parties compute
\[
(\omega^{(k)}, cc^{(k)}) = \text{HMAC-SHA512}\!\left(cc^{(k-1)}, B^{(k-1)} \parallel \mathsf{id}^{(k)}\right)
\]
and the child public key
\[
B^{(k)} = B^{(k-1)} + \omega^{(k)}G.
\]
The tuple \( (B^{(k)}, cc^{(k)}, \mathsf{id}^{(k)}) \) is the sender-visible session descriptor for the current transaction. \qw{Each receiver sets \(st_j^{(k)}\leftarrow\mathsf{TS.Tweak}(st_j^{(k-1)},\omega^{(k)},B^{(k)})\). The parties create no preprocessing for this intermediate key.} The descriptor can be shared with the sender side without leaking the child secret shares.

\paragraph{\underline{Step 1.2}: Anonymous destination generation.}
Using the child public key \( B^{(k)} \), the sender-side subset \( S_1 \) runs the DSAG rule from \S\ref{sec-dsag}. Each participating sender computes \( \Omega_i^{(k)} = a_i B^{(k)} \). \qw{A DLEQ proof links this term to \(A_i=a_iG\). Only verified terms enter}
\[
\Omega^{(k)} = \sum_{i \in S_1}\lambda_{i,S_1}\Omega_i^{(k)} = aB^{(k)}.
\]
The senders then sample a fresh public label \( \xi^{(k)} \), derive
\[
\rho^{(k)} = H(\Omega^{(k)} \parallel \xi^{(k)}),
\]
and form the one-time destination
\[
D^{(k)} = B^{(k)} + \rho^{(k)}G.
\]
The public metadata attached to the transaction is \( (\mathsf{id}^{(k)}, \xi^{(k)}) \). At this point, \( D^{(k)} \) is already a valid one-time receiver key, but the payment still requires sender-side threshold authorization.

\paragraph{\underline{Step 1.3}: Outgoing authorization.}
A qualified subset \(T_1\) of \(\mathsf{DAO}_1\) uses its signing states \(\{st_i^A\}_{i\in T_1}\) and the threshold-signature scheme in \S\ref{sec-2N} to authorize payment to \(D^{(k)}\). This completes the generation phase. The chain verifies the payment signature, while the recipient field reveals only the one-time key \(D^{(k)}\). After confirmation, the senders erase their partial and aggregate DH values and the offset \(\rho^{(k)}\).

\subsection{Recovery-and-Redemption Phase}

The second phase begins when \( \mathsf{DAO}_2 \) observes a transaction carrying \( (D^{(k)}, \mathsf{id}^{(k)}, \xi^{(k)}) \).

\paragraph{Step 2.1: Output detection and ownership check.}
From \(\mathsf{id}^{(k)}\) and the current parent state, each receiver derives the same offset \(\omega^{(k)}\) and its child state:
\[
b_j^{(k)} = b_j^{(k-1)} + \omega^{(k)}, \qquad
B_j^{(k)} = B_j^{(k-1)} + \omega^{(k)}G.
\]
Each party then publishes \(\Omega_j'^{(k)}=b_j^{(k)}A\) with a DLEQ proof linked to \(B_j^{(k)}\). After verifying these proofs, the subset aggregates
\[
\Omega'^{(k)}=\sum_{j\in S_2}\lambda_{j,S_2}\Omega_j'^{(k)}
=b^{(k)}A=\Omega^{(k)}
\]
and tests the candidate address
\[
D'^{(k)}=B^{(k)}+H(\Omega'^{(k)}\parallel\xi^{(k)})G.
\]
The output is accepted only when \(D'^{(k)}=D^{(k)}\). A successful test creates a pending redemption record; it does not yet advance the persistent wallet state.

\paragraph{Step 2.2: One-time share recovery.}
For an accepted output, the receivers derive
\[
\rho^{(k)} = H(\Omega'^{(k)} \parallel \xi^{(k)})
\]
and compute one-time secret shares
\[
d_j^{(k)} = b_j^{(k)} + \rho^{(k)}, \qquad
D_j^{(k)} = d_j^{(k)}G.
\]
\qw{The receiver accepts only if every \(D_j^{(k)}=B_j^{(k)}+\rho^{(k)}G\) and}
\[
D^{(k)} = \sum_{j \in S_2}\lambda_{j,S_2} D_j^{(k)}
\]
holds. This is the point at which the anonymous output is mapped back into threshold-held secret state.

\paragraph{Step 2.3: Threshold redemption and state update.}
\qw{Each receiver applies \(\mathsf{TS.Tweak}\) with \(\rho^{(k)}\) and creates fresh preprocessing for \(D^{(k)}\). A qualified subset signs the redemption transaction. Preprocessing is never reused across keys.} After the redemption transaction is confirmed, the parties commit \((st_j^{(k)},B^{(k)},cc^{(k)})\) and mark \(\mathsf{id}^{(k)}\) as used. They erase the one-time ECDSA state, offset, receiver-side partial and aggregate DH values, and secret copies in the pending record.

\FloatBarrier
\section{Security Analysis}
\label{sec:security-analysis}

This section states the main security results for \textsc{Dao$^2$} and gives concise proof sketches. Full proofs and intermediate lemmas appear in Appendix~\ref{sec:security-proofs}.

We write \( a \) for the sender-side aggregate signing key, \( b^{(k)} \) for the receiver-side child key at epoch~\(k\), and adopt the notation of \S\ref{sec:e2e-protocol}. All negligible functions are in the security parameter~\(\kappa\).

\subsection{Transaction Correctness}
\label{subsec:sa-correctness}

\begin{thm}[Transaction correctness]
\label{thm:correctness}
For every honestly executed transaction involving qualified subsets \( S_1 \subseteq [n_1] \) and \( S_2 \subseteq [n_2] \), the one-time destination \( D^{(k)} \) produced in Phase~I satisfies
\[
D^{(k)} = \Big(\sum_{j \in S_2} \lambda_{j,S_2}\, d_j^{(k)}\Big)\, G,
\]
where \( d_j^{(k)} = b_j^{(k)} + \rho^{(k)} \) are the recovered one-time shares in Phase~II. Furthermore, any qualified receiver-side subset can produce a valid threshold signature under~\( D^{(k)} \), and all honest parties agree on the evolved derivation state
\( (B^{(k)},\, cc^{(k)}) \).
\end{thm}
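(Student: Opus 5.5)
The proof is a direct algebraic verification built on three linear facts: the Shamir reconstruction identity $x=\sum_{i\in S}\lambda_{i,S}x_i$ for any qualified $S$, the fact that Lagrange coefficients over a qualified set sum to one, $\sum_{j\in S}\lambda_{j,S}=1$, and the additive-tweak correctness of the threshold-ECDSA interface in \S\ref{sec-2N}. I will treat the three claims of the theorem in order: the key relation for $D^{(k)}$, signing capability under $D^{(k)}$, and agreement on the successor state.

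\textbf{Step 1 (DKD consistency).} Since every honest receiver evaluates $\text{HMAC-SHA512}(cc^{(k-1)},B^{(k-1)}\parallel\mathsf{id}^{(k)})$ on the same agreed public inputs, all of them obtain the same pair $(\omega^{(k)},cc^{(k)})$. The same holds for the sender side, which computes $B^{(k)}$ from the descriptor $\delta^{(k)}$.

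Adding the common $\omega^{(k)}$ to each share gives
\[
\sum_{j}\lambda_{j,S_2}b_j^{(k)}=b^{(k-1)}+\omega^{(k)}\sum_j\lambda_{j,S_2}=b^{(k-1)}+\omega^{(k)}.
\]
So the child shares are Shamir shares of $b^{(k)}$, and $b^{(k)}G=B^{(k-1)}+\omega^{(k)}G=B^{(k)}$. The same computation in the exponent shows $B^{(k)}=\sum_j\lambda_{j,S_2}B_j^{(k)}$.

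\textbf{Step 2 (DH agreement).} On the sender side, $\sum_{i\in S_1}\lambda_{i,S_1}a_iB^{(k)}=aB^{(k)}=ab^{(k)}G$. On the receiver side, $\sum_{j\in S_2}\lambda_{j,S_2}b_j^{(k)}A=b^{(k)}aG$.

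These agree, so $\Omega'^{(k)}=\Omega^{(k)}$. Because $\xi^{(k)}$ is read from the chain, both sides hash identical inputs and obtain the same $\rho^{(k)}$. In an honest run all DLEQ proofs verify by completeness, so no term is dropped.

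With a common $\rho^{(k)}$ and again using $\sum\lambda_{j,S_2}=1$,
\[
\Big(\sum_j\lambda_{j,S_2}d_j^{(k)}\Big)G=(b^{(k)}+\rho^{(k)})G=B^{(k)}+\rho^{(k)}G=D^{(k)}.
\]
This proves the displayed identity. It also shows that the Step~2.2 checks $D_j^{(k)}=B_j^{(k)}+\rho^{(k)}G$ and $D^{(k)}=\sum_j\lambda_{j,S_2}D_j^{(k)}$ pass, so the honest execution does not abort.

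\textbf{Step 3 (signing and state).} The receiver state reaches $D^{(k)}$ through two tweaks: $st_j^{(k)}$ is obtained by $\mathsf{TS.Tweak}$ with $\omega^{(k)}$, and $st_j^{D,(k)}$ by a further tweak with $\rho^{(k)}$. The scalar shares are therefore $d_j^{(k)}$ and the public key is $D^{(k)}$. By the interface correctness requirement in \S\ref{sec-2N}, which holds across polynomially many public additive tweaks with fresh preprocessing, any qualified $T_2$ outputs $\sigma_{\mathrm{spend}}^{(k)}$ with $\mathsf{TS.Verify}(D^{(k)},m_{\mathrm{spend}}^{(k)},\sigma_{\mathrm{spend}}^{(k)})=1$. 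The sender's payment signature verifies under $A$ in the same way.

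For state agreement, $(B^{(k)},cc^{(k)})$ is a deterministic function of agreed public data (Step~1). Commitment is gated on a single confirmed redemption under the shared finality rule, so all honest parties commit the identical successor and $\mathsf{StateInconsistent}=0$.

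The main obstacle is conceptual rather than computational. Correctness of signing rests entirely on the assumed interface correctness under composed tweaks, and on the ECDSA verifier accepting $D^{(k)}$ as an ordinary key. The proof must therefore state clearly that it is conditional on that requirement rather than establishing it. I would also note the negligible-probability failure cases: $\omega^{(k)}$ or $\rho^{(k)}$ landing outside $\mathbb{Z}_q$, or $D^{(k)}$ being the identity. These are absorbed into the $\mathsf{negl}(\kappa)$ bound of the correctness definition.
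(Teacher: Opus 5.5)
Your proposal is correct and follows the paper's proof essentially step for step. The paper likewise establishes DKD consistency via $\sum_{j}\lambda_{j,S_2}=1$ (Lemma~\ref{lem:dkd-consistency}), sender/receiver DH agreement (Lemma~\ref{lem:shared-secret-consistency}) and one-time key reconstruction (Lemma~\ref{lem:otk-reconstruction}), then defers signing to the additive-tweak interface correctness and state agreement to deterministic derivation plus commitment after the confirmed redemption; your extra remarks (DLEQ completeness, the Step~2.2 checks passing, the two composed tweaks, and the negligible degenerate cases) are harmless refinements of the same argument.
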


\textit{Proof sketch.}
Adding the same public offset \(\omega^{(k)}\) to every receiver share preserves its Lagrange reconstruction because \(\sum_j\lambda_{j,S_2}=1\). Hence
\(\sum_j\lambda_{j,S_2}b_j^{(k)}=b^{(k)}\) and \(B^{(k)}=b^{(k)}G\) (Lemma~\ref{lem:dkd-consistency}). The sender and receiver then obtain the same shared secret:
\[
\sum_{i\in S_1}\lambda_{i,S_1}a_iB^{(k)}
=aB^{(k)}=b^{(k)}A
=\sum_{j\in S_2}\lambda_{j,S_2}b_j^{(k)}A.
\]
Therefore both sides derive the same \(\rho^{(k)}\). Adding this offset to the receiver shares gives
\(\sum_j\lambda_{j,S_2}d_j^{(k)}=b^{(k)}+\rho^{(k)}\), whose public key is exactly \(D^{(k)}\) (Lemmas~\ref{lem:shared-secret-consistency} and~\ref{lem:otk-reconstruction}). Fresh preprocessing for \(D^{(k)}\) then yields a valid threshold-ECDSA signature, while the deterministic derivation inputs make all honest parties adopt the same next state. \qed

\subsection{Spending Unforgeability}
\label{subsec:sa-threshold}

\begin{thm}[Spending unforgeability]
\label{thm:threshold-spending}
\begin{revision}
Assume knowledge-sound DLEQ proofs, the agreed-session model, and the full threshold-ECDSA interface requirement in \S\ref{sec-2N}. No PPT adversary \(\mathcal{A}\) with fixed sub-threshold corrupted sets can recover an honestly protected key or forge a valid spending transaction:
\[
\Pr\!\left[\mathsf{Exp}^{\mathrm{spend\text{-}uf}}_{\mathcal{A}}(1^\kappa)=1\right] \leq \mathsf{negl}(\kappa).
\]
\end{revision}
\end{thm}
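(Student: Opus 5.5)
The plan is a reduction to the threshold-ECDSA interface requirement of \S\ref{sec-2N}, with the two winning conditions of $\mathsf{Exp}^{\mathrm{spend\text{-}uf}}_{\mathcal{A}}$ handled separately. The structural observation is that every honestly protected key in the lineage is an additive tweak of a base key by public offsets. For the receiver these keys are $B^{(k)} = B^{(0)} + \big(\sum_{\ell\le k}\omega^{(\ell)}\big)G$ and $D^{(k)} = B^{(k)} + \rho^{(k)}G$; for the sender the key is $A$ itself. The offsets $\omega^{(\ell)}$ are HMAC outputs on public data. The offset $\rho^{(k)} = H(\Omega^{(k)}\parallel\xi^{(k)})$ may be known to the adversary when it holds a session DH value, and the interface requirement explicitly allows that. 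So I will view the whole execution as polynomially many public additive tweaks plus authorized signing sessions under a single base key per DAO, which is exactly the setting the interface requirement covers.

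\textbf{Step 1 (simulating DSAG terms).} I build a reduction $\mathcal{B}$ that plays the interface game for the base keys $A$ and $B^{(0)}$ and simulates everything else for $\mathcal{A}$. DKD tweaks, $\mathsf{TS.Tweak}$ calls and signing sessions are forwarded directly to the interface oracles. The remaining messages are the DH terms $\Omega_i^{(k)}$ and $\Omega_j'^{(k)}$ and their DLEQ proofs.
\begin{itemize}
\item For honest parties, these are in the joint view the interface requirement is stated against, since it includes ``the DH terms and DLEQ proofs disclosed to corrupted participants by DSAG''.
\item Alternatively, $\mathcal{B}$ can produce honest terms using the zero-knowledge simulator for DLEQ. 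This requires computing $a_iB^{(k)}$ for an honest $i$, which I handle by noting that $B^{(k)}$ is known up to the corrupted parties' contributions, or else by leaning on the interface clause.
\item For corrupted parties' terms, knowledge soundness of DLEQ guarantees that each accepted $\Omega_i^{(k)}$ equals $a_iB^{(k)}$ with the correct $a_i$, except with negligible probability.
\end{itemize}
As a result, the aggregate $\Omega^{(k)}$, and hence $\rho^{(k)}$ and $D^{(k)}$, are exactly the honest values. The adversary therefore cannot steer the destination onto a key whose tweak $\mathcal{B}$ does not know.

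\textbf{Step 2 (signature forgery case).} Suppose $\mathcal{A}$ outputs a valid signature on an unauthorized spending message under $A$, some $B^{(k)}$, or some $D^{(k)}$. Then $\mathcal{B}$ outputs the same pair as a forgery against that base or derived key. This is valid because $\mathcal{B}$ has asked the interface to sign only messages authorized by qualified subsets, and every targeted key is a tweak known to $\mathcal{B}$.

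\textbf{Step 3 (key recovery case).} Suppose $\mathcal{A}$ outputs $x^*$ with $x^*G = PK$ for some lineage key. Subtracting the known offset gives the discrete log of the base key. From that, $\mathcal{B}$ can sign any fresh message under a target key with ordinary ECDSA, which again is a forgery in the interface game.

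Combining the cases, the success probability of $\mathcal{A}$ is at most the interface forgery advantage plus the DLEQ knowledge-soundness error. Each of these is negligible, which gives the theorem.

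\textbf{Main obstacle.} I expect the hardest step to be Step 1: simulating honest DH terms $a_iB^{(k)}$ and $b_j^{(k)}A$ without knowing honest shares, while keeping the reduction tight. Corrupted parties know sub-threshold shares and the public shares $A_i$, $B_j$, so the honest terms are not simply DDH-random. My first attempt will be to rely on the interface requirement's explicit coverage of DSAG-disclosed values, rather than proving a fresh CDH-based simulation. I would reserve a CDH argument, embedding a challenge in an honest $A_i$, only as a fallback.
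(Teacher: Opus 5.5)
Your proposal is correct and follows the paper's proof essentially step for step. Both arguments charge invalid accepted DH contributions to DLEQ knowledge soundness (the paper's $\epsilon_{\mathrm{chk}}$), and both let the interface requirement's explicit coverage of DSAG-disclosed DH terms and proofs absorb the corrupted view, forwarding tweaks and authorized signing requests. Both treat an unauthorized signature as a direct interface forgery and turn key recovery into a fresh ECDSA forgery; Lemma~\ref{lem:otk-secrecy} signs directly with $x^*$ under $PK$, so your detour through the base-key discrete log is unnecessary though harmless. Your fallback for simulating honest DH terms is not needed and, as phrased, does not work, since knowing $B^{(k)}$ does not give $a_iB^{(k)}$. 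If you want it, embed the interface challenge in one DAO, generate the other DAO's keys yourself, and compute $a_iB^{(k)}=b^{(k)}A_i$ or $b_j^{(k)}A=aB_j^{(k)}$ with simulated DLEQ proofs, at a factor-two loss for guessing the targeted DAO.
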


\textit{Proof sketch.}
Condition on valid setup and accepted DH proofs. DKD and DSAG produce the public additive tweaks supported by the signing interface. The adversary may know these tweaks and a session's DH value; secrecy of \(\rho^{(k)}\) is not needed for spending security. Its joint view is covered by the interface requirement. An unauthorized spending signature is therefore an interface forgery. Recovery of a protected scalar key also gives such a forgery by signing a fresh unauthorized message (Lemma~\ref{lem:otk-secrecy}). Adding the negligible probability of an invalid accepted proof establishes the claim. \qed

\subsection{Recipient Indistinguishability}
\label{subsec:sa-privacy}

\begin{thm}[Recipient indistinguishability]
\label{thm:privacy}
Under DDH, independent random oracles \(H\) and \(H_{\mathrm{sig}}\), and honest randomized-ECDSA outputs, the framework satisfies recipient indistinguishability for the external observer in \(\mathsf{Exp}^{\mathrm{rec\text{-}ind}}_{\mathcal{A}}(1^\kappa)\) from \S\ref{sec:security-goals}:
\[
\left|\Pr[b'=b] - \tfrac{1}{2}\right| \leq \mathsf{negl}(\kappa).
\]
The same observer has negligible advantage in the multi-payment experiment of Definition~3.
\end{thm}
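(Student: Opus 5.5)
The proof will be a hybrid argument. We start from the real challenge transcript \(\tau_{\mathrm{chain},b}^{(k)}=(m_{\mathrm{pay},b}^{(k)},\sigma_{\mathrm{pay},b}^{(k)},D_b^{(k)},\mathsf{id}^{(k)},\xi^{(k)})\) and move, through a sequence of games, to one whose distribution does not depend on \(b\).

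The first observation is about what actually depends on \(b\). By the experiment, all non-recipient payment fields are fixed identically, and \(\mathsf{id}^{(k)},\xi^{(k)}\) are freshly sampled independently of \(b\). The only \(b\)-dependent parts are therefore \(D_b^{(k)}\), the recipient field inside \(m_{\mathrm{pay},b}^{(k)}\), and \(\sigma_{\mathrm{pay},b}^{(k)}\). By honest randomized ECDSA with a fresh nonce, and with \(H_{\mathrm{sig}}\) modeled as an independent random oracle, \(\sigma_{\mathrm{pay},b}^{(k)}\) is simulatable from \((A,m_{\mathrm{pay},b}^{(k)})\) alone. It is a public function of the message plus independent randomness. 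It thus suffices to show that \(D_b^{(k)}\) is indistinguishable from a uniform group element.

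\emph{Hybrid 1 (DDH step).} By Theorem~\ref{thm:correctness} and the DSAG relations, \(D_b^{(k)}=B_b^{(k)}+H(\Omega_b\parallel\xi^{(k)})G\), where \(\Omega_b=aB_b^{(k)}\). Here \(A=aG\) is honest and hidden, and \(B_b^{(k)}=b_b^{(k)}G\) is publicly derivable by the observer, since DKD is public. The secret \(b_b^{(k)}=b_b^{(k-1)}+\omega\) is uniform because the root was honestly generated. We replace \(\Omega_b\) by a uniform element \(Z\). The reduction receives a DDH tuple \((G,aG,yG,Z)\), embeds \(A=aG\), and sets the receiver root \(B_b^{(0)}=yG\). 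All derived children then have the form \(yG+\omega' G\), so the corresponding DH values are \(Z+\omega' A\), which are computable by the reduction. The key point is that the observer never sees \(\Omega\). It sees \(\Omega\) only through \(H\), so the game change is invisible unless DDH is broken.

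\emph{Hybrid 2 (random-oracle step).} With \(Z\) uniform and unknown to \(\mathcal{A}\), the adversary queries \(H\) at \(Z\parallel\xi^{(k)}\) only with negligible probability, bounded by \(q_H/q\). So \(\rho\) is uniform and independent, and \(D_b^{(k)}=B_b^{(k)}+\rho G\) is uniform in \(\mathbb{G}\), independent of \(b\). Hence the game is independent of \(b\). Summing the DDH advantage with \(q_H/q\) gives a negligible bound.

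For the multi-payment unlinkability claim, I extend the hybrids over polynomially many payments. The main obstacle is here. The DH values across one lineage are correlated, since \(\Omega^{(k+1)}=\Omega^{(k)}+\omega^{(k+1)}A\) (as the paper's remark notes), so one cannot naively treat each payment independently. I would handle this by using a single DDH challenge per lineage: all \(\Omega\)'s on that lineage are affine in one \(Z\) with public offsets. Once \(Z\) is uniform, each hash input \(Z+\omega A\parallel\xi\) remains unpredictable, and because the labels \(\xi\) are fresh the inputs are pairwise distinct except with negligible probability. All \(\rho\)'s are then independent uniform values. The two lineages, and the two experiments, handle the two independent roots with two DDH instances, and sequential hybrids give a total advantage at most \(2\,\mathsf{Adv}^{\mathrm{DDH}}+\mathrm{poly}\cdot q_H/q\). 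One subtlety to verify is that the random self-reducibility of DDH handles the shared \(A\) across both roots. I would use the standard rerandomization \((A,yG+\beta G,Z+\beta A)\).
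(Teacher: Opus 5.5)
You follow the paper's hybrid structure closely. Each lineage root gets one DDH embedding, with every descendant DH value written as $T_c+\Delta_{c,k}A$. The destination is then masked through the random oracle, and fresh labels $\xi^{(k)}$ keep the stealth-hash inputs distinct. The gap is in how you dispose of $\sigma_{\mathrm{pay},b}^{(k)}$. ECDSA is not ``a public function of the message plus independent randomness'': the equation $s=\nu^{-1}(H_{\mathrm{sig}}(m)+ra)$ needs the sender secret $a$. That $a$ is exactly the exponent your Hybrid~1 reduction does not know, because it embeds $A=aG$ from the DDH tuple. So the reduction cannot produce the signed transcript, and reducing everything to pseudorandomness of $D_b^{(k)}$ is unjustified: the signature is not post-processing of $D$ that the reduction can compute. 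Re-embedding does not help either. If the reduction knew $a$, it could compute $aB^{(k)}$ itself and there would be nothing to replace. The paper flags this explicitly and gives it a separate lemma.

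The missing ingredient is simulating ECDSA under $A$ by programming $H_{\mathrm{sig}}$, as the paper does:
\begin{itemize}
\item Choose $u\in_R\mathbb{Z}_q$ and $v\in_R\mathbb{Z}_q^*$, set $R=uG+vA$, $r=x(R)\bmod q$, and $s=rv^{-1}$.
\item Program $H_{\mathrm{sig}}(m):=us$. Verification then recomputes $uG+vA=R$.
\item With $\nu=u+va$ this is the honest signing equation, so the distribution is correct up to retry events.
\end{itemize}
Programming requires that $H_{\mathrm{sig}}(m)$ has not already been queried, so the simulator must abort in that case, and you must bound the abort probability. You cannot argue that $m$ is unpredictable directly in the real-root game before the DDH step. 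Instead, run the same simulator, abort rule included, in both DDH cases. The abort flag is efficiently computable by the reduction, so its probability differs between the real-root and random-root games by at most the DDH advantage. In the random-root game, $m$ contains a uniform destination, so a prior query hit has probability $O(Q_{\mathrm{sig}}/q)$ per payment. In the multi-payment setting you also need distinct payments to give distinct messages, since you cannot program one point twice. This is the source of the paper's $O((Q_sQ_{\mathrm{sig}}+Q_s^2)/q)$ term. The rerandomization subtlety you flag is not an obstacle. A sequential hybrid with one plain DDH instance per lineage avoids it, generating the non-embedded root from a known scalar as the paper does.
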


\textit{Proof sketch.}
Consider the public chain transcript
\[
\tau_{\mathrm{chain}}^{(k)} =
\bigl(m_{\mathrm{pay}}^{(k)}, \sigma_{\mathrm{pay}}^{(k)}, D^{(k)}, \mathsf{id}^{(k)}, \xi^{(k)}\bigr).
\]
Both candidate session descriptors are derivable from public data; the challenge view does not reveal which one was selected.
For a lineage rooted at \(B^{(0)}\), write \(B^{(k)}=B^{(0)}+\Delta_kG\). Replace its root DH value \(aB^{(0)}\) by a random point \(T\), and use \(T+\Delta_kA\) consistently at every descendant. DDH justifies this replacement. Fresh labels separate the stealth-hash inputs; unless a query hits a hidden DH input, each destination is uniformly masked.

The payment signatures require a separate step because they use the same secret \(a\). Lemma~\ref{lem:public-transcript-reduction} simulates randomized ECDSA from \(A\) by programming \(H_{\mathrm{sig}}\), with negligible probability of a prior-query conflict. This gives a reduction for the signed transcript. Applying the root replacement to each challenge lineage proves unlinkability without treating related DH values as independent (Lemma~\ref{lem:multi-tx-unlink}).
\qed

\subsection{Byzantine Robustness}
\label{subsec:sa-robustness}

\begin{thm}[Byzantine robustness]
\label{thm:robustness}
\begin{revision}
Assume consistent threshold setup, knowledge-sound DLEQ proofs, the agreed-session and channel model, and the signing-interface requirement in \S\ref{sec-2N}. No PPT static sub-threshold adversary can make honest parties accept invalid contributions or different successor states for the same confirmed session:
\[
\Pr\!\left[\mathsf{Exp}^{\mathrm{rob}}_{\mathcal{A}}(1^\kappa)=1\right] \leq \mathsf{negl}(\kappa).
\]
\end{revision}
\end{thm}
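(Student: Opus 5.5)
The plan is to bound \(\Pr[\mathsf{Exp}^{\mathrm{rob}}_{\mathcal{A}}(1^\kappa)=1]\) by a union over the two failure events of Definition~4. For each event, we first list every point where an honest party accepts a contribution from another party. We then show that acceptance of a malformed contribution implies either a break of DLEQ knowledge soundness or a break of the signing-interface requirement of \S\ref{sec-2N}. The contribution points in one session are the following.
\begin{enumerate}
\item The public derivation data \((\omega^{(k)},cc^{(k)},B^{(k)})\) of DKD (Step~1.1 and Step~2.1(a)).
\item The sender DH terms \(\Omega_i^{(k)}\) (Step~1.2(b)).
\item The receiver DH terms \(\Omega_j'^{(k)}\) (Step~2.1(b)).
\item The one-time public shares \(D_j^{(k)}\) (Step~2.2(d)).
\item The partial signatures inside \(\mathsf{TS.Sign}\) (Steps~1.3 and~2.3).
\end{enumerate}
There are polynomially many sessions, so a union bound over sessions and checkpoints suffices.

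First, the DKD data cannot be malformed in a way honest parties accept. Each honest party computes \(\omega^{(k)}\) and \(cc^{(k)}\) locally as a deterministic HMAC of the agreed parent state and tag. It also updates its own share and every public share \(B_j^{(k)}=B_j^{(k-1)}+\omega^{(k)}G\) itself. By the agreed-session model, honest parties start from the same \((B^{(k-1)},cc^{(k-1)},\mathsf{id}^{(k)})\). By consistent setup and Lemma~\ref{lem:dkd-consistency}, they therefore hold identical public derivation data. A corrupted party can only announce values, and an announced value that differs from the locally recomputed one is rejected.

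Second, the DH terms. If an accepted \(\Omega_i^{(k)}\) differs from \(a_iB^{(k)}\), we build a reduction that runs the knowledge extractor. The extractor returns a witness \(x\) with \(xG=A_i\) and \(xB^{(k)}=\Omega_i^{(k)}\). Since \(A_i\) fixes \(x=a_i\), this contradicts the assumption, so the event has probability bounded by the soundness error. The receiver terms \(\Omega_j'^{(k)}\) are handled the same way with respect to \(B_j^{(k)}\), which all honest parties hold consistently by the first step. As a result, every aggregated \(\Omega\) and \(\Omega'\) equals \(aB^{(k)}=b^{(k)}A\) (Lemma~\ref{lem:shared-secret-consistency}), so all honest receivers compute the same \(\rho^{(k)}\). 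The shares \(D_j^{(k)}\) require no cryptography: they are checked deterministically against the publicly recomputable value \(B_j^{(k)}+\rho^{(k)}G\) and against the reconstruction of \(D^{(k)}\) (Lemma~\ref{lem:otk-reconstruction}). An inconsistent share is rejected outright, and if too few shares survive the protocol aborts visibly, which is not a violation.

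Third, and this is the main obstacle, the signing layer and the successor state. For signing, the tweaks \(\omega^{(k)}\) and \(\rho^{(k)}\) are public. The adversary's view consists of public shares, corrupted local states, transcripts, and the DH terms and DLEQ proofs disclosed to corrupted parties, which is exactly the joint view covered by the interface requirement. Any accepted invalid partial contribution that is neither rejected nor causes a visible abort is therefore a break of that requirement. I expect the work here to be stating the reduction carefully so that the simulated joint view across polynomially many tweaks matches the one assumed.

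For the successor state, a commit happens only after confirmation under the shared finality rule, so all honest parties condition on the same confirmed transaction. Their committed tuple \((st_j^{(k)},B^{(k)},cc^{(k)})\) is a deterministic function of the agreed parent state, the tag, and the confirmed transaction. Serialization of each lineage excludes two concurrent commits at the same index. Hence two honest parties commit different states only if some earlier check accepted malformed data, an event already bounded. Summing the negligible soundness errors and the interface-break probability over all sessions gives the claimed bound.
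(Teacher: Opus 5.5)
Your proposal is correct and follows essentially the same route as the paper. The paper likewise splits Definition~4 into two parts. Invalid-contribution acceptance is ruled out by consistent setup, deterministic derivation from agreed inputs, DLEQ knowledge soundness for both sides' DH terms, the per-share checks on \(D_j^{(k)}\), and the signing-interface requirement. Divergent successor states are ruled out by the agreed record, the common finality rule, and committing only after confirmation; your explicit extractor argument and union bound over sessions just spell out what the paper's lemmas state more tersely.
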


\textit{Proof sketch.}
Consistent setup and the agreed session inputs give matching derivation states (Lemmas~\ref{lem:dkg-robust} and~\ref{lem:derivation-consistency}). DLEQ proofs and per-share checks reject invalid DH and recovery contributions. Signing checks are supplied by the interface requirement. Parties commit only the successor of the agreed record after confirmation, so accepted successor states coincide. Failed checks cause a visible abort, and delayed or missing messages permit no state commit. This proves acceptance safety, not guaranteed completion. \qed

\subsection{State Exposure}
\label{subsec:sa-forward}
Anyone retaining an old extended public state and the later tags can recompute the intervening offsets and translate a current scalar share back to that member's earlier share. Erasing one-time states and DH records limits retained secrets, but additive derivation does not provide forward secrecy after a threshold compromise.


\begin{figure*}[!t]
\centering
\captionsetup{font=footnotesize,skip=2pt,singlelinecheck=true}

\begin{minipage}[t]{0.32\textwidth}
\vspace{0pt}\centering
\begin{tikzpicture}
\pgfplotsset{width=0.71\linewidth,height=2.38cm,scale only axis}
\begin{axis}[
    ybar,
    bar width=2.4pt,
    xlabel={DAO size $n$},
    ylabel={Time (ms)},
    symbolic x coords={3,5,7,10,15,20},
    xtick=data,
    ymin=0, ymax=42,
    tick label style={font=\scriptsize},
    label style={font=\scriptsize},
    legend image code/.code={\draw[#1] (0cm,-0.04cm) rectangle (0.10cm,0.04cm);},
    legend style={at={(0.03,0.97)},anchor=north west,
        font=\fontsize{4.7}{5.1}\selectfont,
        cells={anchor=west},legend columns=1,row sep=-1.2pt,
        inner xsep=1pt,inner ysep=1pt},
    grid=major,
    grid style={dashed, gray!30},
    every axis plot/.style={thick},
    enlarge x limits=0.12,
]
\addplot[fill=blue!60, draw=blue!80] coordinates {
    (3,1.23) (5,1.86) (7,2.33) (10,3.19) (15,4.83) (20,6.47)
};
\addplot[fill=red!60, draw=red!80] coordinates {
    (3,3.91) (5,6.62) (7,9.30) (10,13.90) (15,20.21) (20,27.84)
};
\addplot[fill=orange!60, draw=orange!80] coordinates {
    (3,4.79) (5,8.13) (7,11.20) (10,16.67) (15,24.67) (20,33.80)
};
\addplot[fill=green!60, draw=green!80] coordinates {
    (3,2.17) (5,2.15) (7,2.20) (10,2.14) (15,2.18) (20,2.24)
};
\legend{DKD,DSAG-Sender,DSAG-Receiver,ECDSA check}
\end{axis}
\end{tikzpicture}
\captionof{figure}{Core time versus DAO size~$n$ ($t=2$); excludes DLEQ and threshold ECDSA.}
\label{fig:comp-cost}
\end{minipage}
\hfill
\begin{minipage}[t]{0.32\textwidth}
\vspace{2.5pt}\centering
\begin{tikzpicture}
\pgfplotsset{width=0.71\linewidth,height=2.38cm,scale only axis}
\begin{axis}[
    xlabel={DAO size $n$},
    ylabel={Latency (ms)},
    xtick={3,5,7,10,15,20},
    ymin=0, ymax=85,
    tick label style={font=\scriptsize},
    label style={font=\scriptsize},
    legend image post style={xscale=0.65},
    legend style={at={(0.03,0.97)},anchor=north west,
        font=\fontsize{4.7}{5.1}\selectfont,
        cells={anchor=west},row sep=-1.2pt,
        inner xsep=1pt,inner ysep=1pt},
    grid=major,
    grid style={dashed, gray!30},
    mark size=1.8pt,
    every axis plot/.style={thick},
]
\addplot[color=blue, mark=triangle*, mark options={fill=blue}] coordinates {
    (3,7.45) (5,10.66) (7,14.13) (10,19.17) (15,27.86) (20,36.84)
};
\addplot[color=red, mark=square*, mark options={fill=red}] coordinates {
    (3,7.13) (5,10.25) (7,13.60) (10,19.09) (15,27.28) (20,36.12)
};
\addplot[color=black!70, mark=*, mark options={fill=black!70}, dashed] coordinates {
    (3,14.53) (5,20.84) (7,27.72) (10,38.39) (15,54.92) (20,73.20)
};
\legend{Phase I,Phase II,Total}
\end{axis}
\end{tikzpicture}
\captionof{figure}{Core latency versus DAO size~$n$ ($t=2$); DLEQ and threshold ECDSA are excluded.}
\label{fig:e2e-latency}
\end{minipage}
\hfill
\begin{minipage}[t]{0.32\textwidth}
\vspace{0pt}\centering
\begin{tikzpicture}
\pgfplotsset{width=0.71\linewidth,height=2.38cm,scale only axis}
\begin{axis}[
    ybar stacked,
    bar width=7pt,
    xlabel={DAO size $n$},
    ylabel={Bytes},
    symbolic x coords={3,5,7,10,15,20},
    xtick=data,
    ymin=0, ymax=8200,
    tick label style={font=\scriptsize},
    label style={font=\scriptsize},
    scaled y ticks=false,
    legend image code/.code={\draw[#1] (0cm,-0.04cm) rectangle (0.10cm,0.04cm);},
    legend style={at={(0.03,0.97)},anchor=north west,
        font=\fontsize{4.7}{5.1}\selectfont,
        cells={anchor=west},legend columns=1,row sep=-1.2pt,
        inner xsep=1pt,inner ysep=1pt},
    grid=major,
    grid style={dashed, gray!30},
    enlarge x limits=0.12,
]
\addplot[fill=blue!50, draw=blue!70] coordinates {
    (3,81) (5,81) (7,81) (10,81) (15,81) (20,81)
};
\addplot[fill=red!50, draw=red!70] coordinates {
    (3,633) (5,1023) (7,1413) (10,1998) (15,2973) (20,3948)
};
\addplot[fill=green!50, draw=green!70] coordinates {
    (3,128) (5,128) (7,128) (10,128) (15,128) (20,128)
};
\addplot[fill=orange!50, draw=orange!70] coordinates {
    (3,492) (5,820) (7,1148) (10,1640) (15,2460) (20,3280)
};
\legend{DKD,DSAG-Sender,Signatures,DSAG-Receiver}
\end{axis}
\end{tikzpicture}
\captionof{figure}{Communication with DLEQ and commit--open; excludes threshold ECDSA.}
\label{fig:comm-overhead}
\end{minipage}

\end{figure*}

\section{Implementation and Evaluation}
\label{sec:implementation}

\begin{revision}
We implemented DKD, DSAG, and a standard ECDSA compatibility check in Python with the \texttt{ecdsa} library on secp256k1. The test harness reconstructs the key only to check the final ECDSA output; it is not a threshold-ECDSA implementation. This check uses deterministic ECDSA; Theorem~\ref{thm:privacy} instead specifies randomized-ECDSA outputs for the protocol. Computation excludes DLEQ, DKG, per-key preprocessing, and interactive threshold signing. Communication includes DLEQ, optional sender commit-open, and the two final signatures, but excludes internal threshold-ECDSA messages. We ran each DAO-size setting 10 times on an Apple Silicon macOS machine and report the median. For the depth experiment, we advance the derivation chain once to each target depth and report the median of up to the final five per-step measurements.
\end{revision}

We evaluate five aspects: (i)~per-module computation cost as a function of DAO size~$n$; (ii)~end-to-end transaction latency across both protocol phases; (iii)~communication overhead; (iv)~comparison with baseline schemes; and (v)~key-derivation scalability with respect to derivation depth. Throughout, we fix the signing threshold at $t=2$ and vary $n \in \{3, 5, 7, 10, 15, 20\}$ to cover the practically relevant range for DAO treasuries.

\subsection{Per-Module Computation Cost (Fig.~\ref{fig:comp-cost})}
\label{subsec:eval-module}

To identify the main computational bottleneck, we time DKD, sender-side DSAG, receiver-side DSAG, and the standard-ECDSA compatibility check separately, with $n_1=n_2=n$ and $t=2$. At $n{=}7$, the four components take 2.33, 9.30, 11.20, and 2.20\,ms, respectively. Across the full range, DKD grows from 1.2 to 6.5\,ms, while the sender and receiver DSAG paths grow from 3.9 to 27.8\,ms and from 4.8 to 33.8\,ms. Thus, DSAG dominates the core computation and scales near-linearly with DAO size, whereas the ECDSA check remains nearly constant. These measurements exclude DLEQ costs.

\subsection{End-to-End Transaction Latency (Fig.~\ref{fig:e2e-latency})}
\label{subsec:eval-e2e}

We form Phase~I from DKD, DSAG-Sender, and the standard-ECDSA compatibility check, and Phase~II from DSAG-Receiver and the same check. We report the phases separately because transaction generation and redemption are executed by different DAO subsets and may occur at different times; the end-to-end value is their sum.

Our results show closely matched sender and receiver paths. At $n{=}7$, Phase~I takes 14.13\,ms and Phase~II takes 13.60\,ms, a difference of only 0.53\,ms, for a total of 27.72\,ms. The total rises from 14.53\,ms at $n{=}3$ to 73.20\,ms at $n{=}20$ and remains below 28\,ms for $n\leq7$. This balance indicates that neither side becomes a separate bottleneck as the DAO grows. The values measure the core algebra and ECDSA compatibility, not a full malicious-secure threshold-ECDSA execution.

\subsection{Communication Overhead (Fig.~\ref{fig:comm-overhead})}
\label{subsec:eval-comm}

We count all public data transmitted for one transaction using compressed EC points (33 bytes), 32-byte scalars and hash outputs, and 64-byte ECDSA signatures. The accounting includes DLEQ proofs, the optional sender commit--open round, and the two final signatures. DKD metadata contributes a fixed 81 bytes and the signatures contribute 128 bytes. Each sender contributes a 33-byte partial-DH term and a 98-byte DLEQ proof; the optional commit--open layer adds a 32-byte commitment and a 32-byte opening value per sender. Each receiver contributes a 33-byte DH term, a 98-byte proof, and a 33-byte public share. We exclude internal messages of the selected threshold-ECDSA protocol because their size is implementation-dependent.

Our results show totals of 1,334 bytes at $n{=}3$, 2,770 bytes at $n{=}7$, and 7,437 bytes at $n{=}20$. For the seven-member case, sender-side and receiver-side DSAG contribute 1,413 and 1,148 bytes, respectively, together accounting for more than 92\% of the total. The fixed DKD and signature fields remain small, so communication grows linearly with the number of participating members. A typical seven-member DAO therefore requires about 2.8\,KB of counted public data.

\subsection{Baseline Comparison (Fig.~\ref{fig:baseline-compare})}
\label{subsec:eval-baseline}

We compare our core with a standard single-user stealth-address payment and the centralized ECDSA check using the same secp256k1 implementation. The single-user SA path takes 3.88\,ms. The distributed \textsc{Dao$^2$} core takes 14.45, 27.91, and 54.42\,ms at $n=3,7,15$, while the ECDSA check stays near 2.5\,ms. The increase therefore comes from distributing DKD and DSAG across the participating members rather than from the final signature check.

Dividing the core cost by $n$ gives 4.82, 3.99, and 3.63\,ms at $n=3,7,15$, respectively. This pattern is consistent with a fixed setup component plus near-linear per-party work; it provides no indication of a superlinear coordination penalty over the measured range. Thus, the price of distributed control remains predictable for moderate DAO sizes. The comparison covers the implemented core only and excludes DLEQ and interactive threshold-ECDSA costs.

\subsection{Key-Derivation Depth Scalability (Fig.~\ref{fig:derivation-depth})}
\label{subsec:eval-depth}

We measure the next DKD step at chain depths 1, 10, 50, 100, 200, 500, and 1000, with $n{=}7$ and $t{=}2$. The cost remains between 2.318 and 2.547\,ms, with a mean of approximately 2.44\,ms. The 0.229\,ms peak-to-peak spread has no monotonic relation to depth, so the measurements show no cumulative slowdown as the derivation chain grows.

This result follows the protocol structure: a DKD step reads only the current parent state, obtains one HMAC-SHA512 output, and applies one additive update to each participating share. Its online cost is therefore constant in depth $k$ for fixed $n$, even though deriving all states from depth 1 through $k$ would still require $k$ sequential steps. Long-lived treasuries can consequently allocate later child keys without increasing the latency of each new derivation.


\begin{figure}[H]
\centering
\captionsetup{font=scriptsize,skip=2pt}

\begin{minipage}[t]{0.49\columnwidth}
\vspace{0pt}\centering
\begin{tikzpicture}
\pgfplotsset{width=0.61\linewidth,height=2.55cm,scale only axis}
\begin{axis}[
    ybar,
    bar width=3pt,
    xlabel={DAO size $n$},
    ylabel={Time (ms)},
    symbolic x coords={1,3,7,15},
    xtick=data,
    xticklabels={$n{=}1$,$n{=}3$,$n{=}7$,$n{=}15$},
    ymin=0, ymax=65,
    tick label style={font=\scriptsize},
    label style={font=\scriptsize},
    legend style={at={(0.02,0.98)},anchor=north west,
        font=\fontsize{5.0}{5.4}\selectfont,
        cells={anchor=west},row sep=-1pt,column sep=1pt,
        inner xsep=1pt,inner ysep=0.8pt},
    legend image post style={xscale=0.55},
    grid=major,
    grid style={dashed, gray!30},
    enlarge x limits=0.22,
]
\addplot[fill=cyan!60, draw=cyan!80] coordinates {
    (1,3.88) (3,0) (7,0) (15,0)
};
\addplot[fill=gray!50, draw=gray!70] coordinates {
    (1,0) (3,2.50) (7,2.48) (15,2.49)
};
\addplot[fill=red!55, draw=red!75] coordinates {
    (1,0) (3,14.45) (7,27.91) (15,54.42)
};
\legend{Standard SA,ECDSA check,\textsc{Dao$^2$} core}
\end{axis}
\end{tikzpicture}
\captionof{figure}{Core computation.}
\label{fig:baseline-compare}
\end{minipage}%
\hfill
\begin{minipage}[t]{0.49\columnwidth}
\vspace{0pt}\centering
\begin{tikzpicture}
\pgfplotsset{width=0.61\linewidth,height=2.55cm,scale only axis}
\begin{axis}[
    xlabel={Derivation depth $k$},
    ylabel={Time per derivation (ms)},
    xmode=log,
    log basis x=10,
    xtick={1,10,100,1000},
    xticklabels={1,10,100,1000},
    ymin=2.0, ymax=3.0,
    tick label style={font=\scriptsize},
    label style={font=\scriptsize},
    legend style={at={(0.98,0.98)},anchor=north east,
        font=\fontsize{5.0}{5.4}\selectfont,
        row sep=-1pt,column sep=1pt,
        inner xsep=1pt,inner ysep=0.8pt},
    legend image post style={xscale=0.55},
    grid=major,
    grid style={dashed, gray!30},
    mark size=1.8pt,
    every axis plot/.style={thick},
]
\addplot[color=blue, mark=*, mark options={fill=blue}] coordinates {
    (1,2.318) (10,2.427) (50,2.420) (100,2.488) (200,2.547) (500,2.453) (1000,2.430)
};
\addplot[color=red, dashed, domain=1:1000, samples=2] {2.44};
\legend{Measured,Mean ($\approx$2.44\,ms)}
\end{axis}
\end{tikzpicture}
\captionof{figure}{DKD cost versus depth~$k$.}
\label{fig:derivation-depth}
\end{minipage}%
\end{figure}

\section{Discussion}
\label{sec:discussion}

We discuss the main design choices, applications, limitations, and possible extensions of \textsc{Dao$^2$}.

\subsection{Design Rationale and Trade-offs}
\label{subsec:disc-design}

The three modules in \textsc{Dao$^2$} share one algebraic interface. DKD adds \(\omega^{(k)}\) to every receiver share, DSAG adds \(\rho^{(k)}\), and threshold ECDSA signs with the resulting states. Because Lagrange coefficients sum to one, both offsets pass through reconstruction and the public key changes by the matching group element.

Within DKD, we use the standard additive non-hardened BIP32 relation. The derivation offset is applied identically to every share, \(b_j^{(k)}=b_j^{(k-1)}+\omega^{(k)}\), so the Lagrange coefficients remain unchanged. The same rule then applies to the stealth offset \(\rho^{(k)}\). This common update rule simplifies the implementation and the correctness proof in Theorem~\ref{thm:correctness}.

We use a 2-out-of-$n$ threshold in our core evaluation. DKD and DSAG also support higher thresholds. To use them, the threshold-ECDSA scheme must support the additive-tweak interface in \S\ref{sec-2N}.

\subsection{Generality and Special-Case Reduction}
\label{subsec:disc-generality}

When \(n_1=n_2=t_1=t_2=1\), each DAO has a single member, so no threshold computation is needed. The receiver derives \(B^{(k)}\) locally. The sender computes \(\Omega^{(k)}=aB^{(k)}\), while the receiver computes the same shared secret as \(b^{(k)}A\). Both then derive the one-time address \(D^{(k)}=B^{(k)}+H(\Omega^{(k)}\parallel\xi^{(k)})G\). The protocol thus reduces to an ECDH-based payment with local BIP32 derivation. Threshold operations are needed only for distributed control.

\begin{revision}
The main practical setting is a 3--7 member DAO with threshold~2. The online protocol uses a constant number of broadcast rounds. Each partial DH term includes a non-interactive DLEQ proof. The optional sender commit-open layer adds one commitment round and one opening round.
\end{revision}

\subsection{Institutional Applications}
\label{subsec:disc-applications}

The downstream value of \textsc{Dao$^2$} is that payment privacy does not replace institutional control. Let the sender and receiver policies be defined by participant sets $[n_1]$ and $[n_2]$ with thresholds $t_1$ and $t_2$. A transaction selects working subsets $S_1$ and $S_2$ for stealth-address generation and detection, and signing subsets $T_1$ and $T_2$ satisfying $|T_1|\geq t_1$ and $|T_2|\geq t_2$. The receiver-facing output is
\[
D^{(k)}=B^{(k)}+H(\Omega^{(k)}\parallel\xi^{(k)})G,
\]
yet redemption remains threshold-controlled because the parties hold shares of the corresponding key rather than reconstructing it. The blockchain verifies ordinary ECDSA signatures under $A$ for payment and $D^{(k)}$ for redemption; the organizational policy remains off chain.

For DAO treasuries, this interface supports treasury-to-treasury grants, investments, or vendor payments in which both sides retain their existing approval rules while the receiver obtains a fresh unlinkable destination. For consortium-controlled funds, the participants may represent different member organizations rather than individual wallet users. The long-term key $A$ fixes the sender policy, while $B^{(k)}$ evolves deterministically for receiver continuity; qualified subsets can execute individual transfers, and DLEQ verification prevents a participating member from contributing a Diffie--Hellman term inconsistent with its registered public share.

For custodial infrastructures, DKD maps naturally to per-account or per-deposit child states, while DSAG hides the public link between an institutional root wallet and a particular incoming payment. The same construction can support MPC custodians, shared escrow services, and institutional settlement wallets because no operator needs the master secret or the one-time spending key in the clear. The session tag $\mathsf{id}^{(k)}$ and evolving chaincode $cc^{(k)}$ provide deterministic internal indexing, whereas $\xi^{(k)}$ and $D^{(k)}$ provide transaction-level unlinkability. Operational policy, audit logging, and regulatory identity controls remain deployment-layer responsibilities rather than cryptographic guarantees of the protocol.

\subsection{Limitations}
\label{subsec:disc-limitations}

Our prototype measures the secp256k1 core and counted public data, but not a complete threshold-ECDSA implementation, on-chain fees, or contract execution. UTXO systems map naturally to one-time outputs, whereas account-based systems require a contract representation of $D^{(k)}$. The protocol uses $O(1)$ broadcast rounds, but still requires a threshold number of online parties on each side.

Theorem~\ref{thm:privacy} covers the public blockchain transcript, not timing, IP-address, or participation-pattern leakage. Network anonymity is therefore a deployment requirement.

\subsection{Extensions and Open Problems}
\label{subsec:disc-future}

The current design assumes fixed membership and threshold within each epoch. Proactive resharing could update both while preserving wallet continuity, and batched or pre-computed derivations could reduce repeated work without introducing cross-transaction links.

Cross-chain deployment requires both chains to use consistent derivation states and a bridge that keeps the destination funds under threshold control. Confidential transactions or decoy mechanisms could provide additional privacy, but combining them with DSAG requires further security analysis. Before deployment, the implementation also needs formal verification and constant-time code.

\section{Related Work}
\label{sec:rw}

\smallskip
\noindent\textbf{Blockchain DAOs.}
DAOs have been studied from conceptual, system, and empirical perspectives. Wang et al.~\cite{wang2019dao} clarified the basic concept and application model of DAOs, while Liu et al.~\cite{liu2021dao} summarized the broader technical and social landscape of blockchain-based DAO systems. More recent work examined DAO architectures in Web3 applications \cite{yu2023leveraging}, surveyed DAO governance and tooling \cite{tang2025decentralized}, and analyzed governance behavior in deployed DAO ecosystems \cite{bellavitis2023dao,wang2025understanding}. This literature explains why digital assets are increasingly controlled by organizations rather than single users. However, its emphasis is on governance and system design, not on cryptographic transaction mechanisms between threshold-controlled organizations \cite{fabrega2025voting,sharma2024unpacking,feichtinger2023hidden,feichtinger2024sok}.

\smallskip
\noindent\textbf{Blockchain threshold signatures.}
Threshold signatures are the main cryptographic tool for shared control of blockchain assets. Foundational work on distributed key generation was developed by Gennaro et al.~\cite{gennaro2007dkg}, and practical threshold ECDSA was substantially advanced by Lindell and Nof~\cite{lindell2018fast}. Then, the literature has emphasized deployability. Canetti et al.~\cite{canetti2020uc} introduced proactive and non-interactive threshold ECDSA with identifiable aborts, and Doerner et al.~\cite{doerner2024threshold} reduced threshold ECDSA to three rounds. Recent work has also strengthened the setup layer and blockchain-facing service model: Kate et al.~\cite{kate2024nivss} studied non-interactive verifiable secret sharing and its application to distributed key generation, while Thyagarajan et al.~\cite{thyagarajan2025vitarit} considered how threshold services can be paid on Bitcoin-like systems. These advances are directly relevant to our authorization backbone, but they still focus on signing, setup, or service payment rather than anonymous receipt by a threshold-controlled recipient.

\smallskip
\noindent\textbf{Stealth-address privacy.}
Receiver privacy in cryptocurrencies originates from one-time-address and ring-based designs~\cite{liu2005linkable}. CryptoNote introduced the scan/spend-key stealth-address paradigm for unlinkable receiving \cite{vanSaberhagen2013cryptonote,sun2017ringct,yuen2020ringct,duan2024concise}, and RingCT later strengthened privacy by hiding transferred amounts in Monero-style systems \cite{noether2016ringct}. Recent work has broadened this direction. Glaeser et al.~\cite{glaeser2022foundations} formalized the security foundations of coin-mixing services, and Pu et al.~\cite{pu2023stealth} gave a more explicit cryptographic treatment of stealth-signature-style private payments. Wang et al.~\cite{wang2024dsag} moved closer to our setting by considering distributed stealth-address generation for threshold signatures. Nevertheless, these works study privacy primitives or closely related subprotocols in isolation. They do not provide an end-to-end framework in which anonymous receipt, threshold redemption, and organizational wallet continuity are handled jointly.

\smallskip
\noindent\textbf{Distributed key derivation.}
BIP32 remains the standard basis for hierarchical deterministic wallet derivation \cite{bip32}. It provides an effective way to derive fresh addresses and child keys under a single root secret, but the original model assumes local control of that secret and therefore does not directly transfer to threshold-managed organizational wallets. Zhong et al.~\cite{zhong2023dkd} are directly relevant because they adapt key derivation to multi-party blockchain asset management. Even so, distributed derivation remains less developed than threshold signing or payment privacy. Existing work explains how child keys can be derived without reconstructing a master secret, but it does not fully address how anonymously received outputs should be bound to derivation state and later redeemed under threshold authorization inside a DAO treasury.

\section{Conclusion}
\label{sec-conclusion}

We presented \textsc{Dao$^2$}, a privacy-adaptable protocol that enables direct asset transfers between threshold-controlled DAOs. \textsc{Dao$^2$} supports both non-anonymous transfers, where the receiver derives a standard child address via distributed key derivation (DKD), and anonymous transfers, where an unlinkable one-time stealth destination is produced via distributed stealth-address generation (DSAG), all without reconstructing any master secret. At its core, \textsc{Dao$^2$} couples three cryptographic primitives: DKD for hierarchical wallet continuity under threshold control, DSAG for recipient privacy and unlinkability, and threshold signatures as the unified authorization backbone for both payment and redemption.

We formally prove security under a sub-threshold adaptive adversary and give a forward-secrecy argument under one-time-share erasure. Our prototype confirms practicality.




\appendices
\section{Notation}
\label{app:notation}

\begin{table}[H]
\centering
\caption{Summary of notation.}\label{tab:notation}
\footnotesize
\setlength{\tabcolsep}{4pt}
\renewcommand{\arraystretch}{1.18}
\resizebox{\linewidth}{!}{
\begin{tabular}{@{} r|l @{}}
\toprule
\multicolumn{1}{c|}{\textbf{Symbol}} &
\multicolumn{1}{c}{\textbf{Description}} \\
\midrule
$\mathbb{G},\; q,\; G$ & Prime-order elliptic-curve group, its order, and generator (secp256k1) \\
$H(\cdot)$;\; $\text{HMAC-SHA512}$ & Hash to $\mathbb{Z}_q$ (SHA-256 based);\; keyed hash for BIP32-style derivation \\
$\lambda_{i,S}$ & Lagrange coefficient for index $i$ w.r.t.\ subset $S$ \\
\midrule
$\mathsf{DAO}_1 / \mathsf{DAO}_2$ & Sender / receiver organization with $n_1$/$n_2$ members and threshold $t_1$/$t_2$ \\
$S_1,\, S_2$ & Qualified subsets for DSAG generation and recovery \\
$T_1,\, T_2$ & Signing subsets: $T_1\!\subseteq\![n_1]$, $T_2\!\subseteq\!S_2$, and $|T_\ell|\!\ge\!t_\ell$ \\
$a,\; a_i,\; A{=}aG$ & Sender aggregate secret, $i$-th share, and long-term public key \\
$b^{(k)},\; b_j^{(k)},\; B^{(k)}{=}b^{(k)}G$ & Receiver child secret at epoch $k$, $j$-th share, and child public key \\
$cc^{(k)},\; \mathsf{id}^{(k)}$ & Chain code and public derivation tag at epoch $k$ \\
\midrule
$\omega^{(k)}$ & DKD offset: first 256 bits of $\text{HMAC-SHA512}(cc^{(k-1)},\, B^{(k-1)} \!\parallel\! \mathsf{id}^{(k)})$ \\
\midrule
$\Omega_i^{(k)} {=} a_i B^{(k)}$ & Sender $i$'s partial DH term \\
$\Omega^{(k)} {=} aB^{(k)}$ & Distributed shared secret (sender); $\Omega'^{(k)}{=}b^{(k)}A$ (receiver; $=\Omega^{(k)}$) \\
$\pi_i^S,\;\pi_j^R$ & DLEQ proofs for sender and receiver partial DH terms \\
$\xi^{(k)},\; \rho^{(k)} {=} H(\Omega^{(k)} \!\parallel\! \xi^{(k)})$ & Random public label;\; stealth offset scalar \\
$D^{(k)} {=} B^{(k)} {+} \rho^{(k)}G$ & One-time stealth destination \\
$d_j^{(k)} {=} b_j^{(k)} {+} \rho^{(k)}$ & One-time secret share;\; $D_j^{(k)}{=}d_j^{(k)}G$: public share \\
\midrule
$\mathsf{TS.Tweak},\;\mathsf{TS.Sign}_S,\; \mathsf{TS.Verify}$ & Additive key update and threshold-ECDSA interfaces \\
$\sigma_{\mathrm{pay}}^{(k)},\; \sigma_{\mathrm{spend}}^{(k)}$ & Signatures on payment and redemption transactions \\
$\delta^{(k)}$ & Session descriptor $(B^{(k)}, cc^{(k)}, \mathsf{id}^{(k)})$ \\
$\tau_{\mathrm{chain}}^{(k)}$ & Public chain transcript $(m_{\mathrm{pay}}^{(k)}, \sigma_{\mathrm{pay}}^{(k)}, D^{(k)}, \mathsf{id}^{(k)}, \xi^{(k)})$ \\
\bottomrule
\end{tabular}
}
\end{table}

\section{Security Proofs}
\label{sec:security-proofs}

\begin{revision}
We use the notation of \S\ref{sec-preli}--\S\ref{sec:e2e-protocol} and the assumptions stated in each theorem.
\end{revision}

\subsection{Proof of Theorem~\ref{thm:correctness} (Transaction Correctness)}
\label{prf:correctness}

Correctness requires that the stealth output $D^{(k)}$ published on chain is always redeemable by the receiver DAO, and that all honest parties converge to a consistent post-transaction wallet state. Because the \textsc{Dao$^2$} protocol chains three algebraic operations (distributed key derivation, distributed shared-secret computation, and one-time key construction), the proof must show that each operation preserves the Lagrange reconstruction invariant. We establish this through three lemmas, one per operation, and then combine them.

\begin{lemma}[Distributed key derivation consistency]
\label{lem:dkd-consistency}
For any derivation step $k$, if all honest parties in $S_2$ hold consistent parent shares $\{b_j^{(k-1)}\}_{j \in S_2}$ with $\sum_{j \in S_2} \lambda_{j,S_2} b_j^{(k-1)} = b^{(k-1)}$, then after deriving the child state:
\[
b_j^{(k)} = b_j^{(k-1)} + \omega^{(k)}, \quad B^{(k)} = B^{(k-1)} + \omega^{(k)}G,
\]
it holds that $\sum_{j \in S_2} \lambda_{j,S_2} b_j^{(k)} = b^{(k-1)} + \omega^{(k)} = b^{(k)}$ and $B^{(k)} = b^{(k)} G$.
\end{lemma}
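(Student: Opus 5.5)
The plan is a direct algebraic verification. It rests on one fact about Lagrange coefficients: for any qualified subset $S_2$ of size at least $t_2$, interpolating at zero gives $\sum_{j\in S_2}\lambda_{j,S_2}=1$.

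I would justify this by applying the reconstruction identity $x=\sum_{j\in S}\lambda_{j,S}x_j$ from \S\ref{sec-preli} to the constant polynomial $f\equiv 1$. All of its shares equal $1$, so reconstruction returns $1$, which is exactly the sum of the coefficients. Because each $\lambda_{j,S_2}$ depends only on the index set $S_2$ and not on the share values, the same coefficients can be used for the parent shares and for the child shares.

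The first step computes the child reconstruction by linearity:
\[
\sum_{j\in S_2}\lambda_{j,S_2}b_j^{(k)}=\sum_{j\in S_2}\lambda_{j,S_2}b_j^{(k-1)}+\omega^{(k)}\sum_{j\in S_2}\lambda_{j,S_2}=b^{(k-1)}+\omega^{(k)}.
\]
This uses the hypothesis on the parent shares together with the fact above. By definition this value is $b^{(k)}$. I would also note that the updated shares are the evaluations of the shifted polynomial $f+\omega^{(k)}$, which has the same degree. The child shares are therefore again a valid Shamir sharing, and the result does not depend on which qualified $S_2$ is chosen.

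The second step handles the public key. Multiplying by $G$ gives $b^{(k)}G=b^{(k-1)}G+\omega^{(k)}G$. I would check that $B^{(k-1)}=b^{(k-1)}G$ holds, either from the consistent threshold setup at $k=0$ or by induction on $k$ using this same lemma. With that in hand, $b^{(k)}G=B^{(k-1)}+\omega^{(k)}G=B^{(k)}$.

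One side condition remains: every party in $S_2$ must use the same offset $\omega^{(k)}$. This holds because $\omega^{(k)}$ is computed deterministically by HMAC-SHA512 from the agreed public inputs $(cc^{(k-1)},B^{(k-1)},\mathsf{id}^{(k)})$. There is no real obstacle in this proof. The only point that needs care is stating the sum-to-one property and the base case of the induction on $B^{(k-1)}$ explicitly.
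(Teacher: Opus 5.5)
Your proposal is correct and follows essentially the same route as the paper. Both use the deterministic HMAC computation to get a common $\omega^{(k)}$, then linearity with $\sum_{j\in S_2}\lambda_{j,S_2}=1$ to pass the offset through reconstruction, then multiplication by $G$. Your explicit justification of the sum-to-one property and of the inductive fact $B^{(k-1)}=b^{(k-1)}G$ (which the paper's final step uses without stating) makes your version slightly more careful.
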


\begin{prf}[Lemma~\ref{lem:dkd-consistency}]
The offset $\omega^{(k)}$ is the first 256 bits of
$\text{HMAC-SHA512}(cc^{(k-1)},\, B^{(k-1)} \!\parallel\! \mathsf{id}^{(k)})$.
Since $B^{(k-1)}$, $cc^{(k-1)}$, and $\mathsf{id}^{(k)}$ are public, all honest parties compute the same $\omega^{(k)}$. Adding a uniform scalar to every share preserves the Lagrange reconstruction relation:
\begin{align}
\sum_{j \in S_2} \lambda_{j,S_2} b_j^{(k)}
&= \sum_{j \in S_2} \lambda_{j,S_2} \bigl(b_j^{(k-1)} + \omega^{(k)}\bigr) \nonumber\\
&= b^{(k-1)} + \omega^{(k)} \!\underbrace{\sum_{j \in S_2} \lambda_{j,S_2}}_{=\,1}
= b^{(k)}.  \label{eq:dkd-lagrange}
\end{align}
Multiplying both sides by $G$ yields
$B^{(k)} = b^{(k)}G$.
\qed
\end{prf}

\begin{lemma}[Shared-secret consistency]
\label{lem:shared-secret-consistency}
The sender-side and receiver-side computations yield the same shared secret:
$\Omega^{(k)} = aB^{(k)} = b^{(k)}A = \Omega'^{(k)}$.
\end{lemma}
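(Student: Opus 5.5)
The plan is a direct algebraic computation resting on two facts. The first is linearity of Lagrange reconstruction in the group: multiplying scalar shares by a fixed point commutes with the Lagrange sum. The second is commutativity of scalar multiplication, $a(bG)=b(aG)$.

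\textbf{Sender side.} I would begin here. By threshold setup, the sender shares satisfy $\sum_{i\in S_1}\lambda_{i,S_1}a_i=a$ for the qualified subset $S_1$. Each verified partial term is $\Omega_i^{(k)}=a_iB^{(k)}$. The DLEQ check against $(G,A_i)$ ensures that the same $a_i$ underlying $A_i$ is used; in the honest execution it holds trivially. Pulling the common point $B^{(k)}$ out of the sum gives
\[
\Omega^{(k)}=\sum_{i\in S_1}\lambda_{i,S_1}a_iB^{(k)}=\Big(\sum_{i\in S_1}\lambda_{i,S_1}a_i\Big)B^{(k)}=aB^{(k)}.
\]

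\textbf{Receiver side.} Symmetrically, I would invoke Lemma~\ref{lem:dkd-consistency} to get $\sum_{j\in S_2}\lambda_{j,S_2}b_j^{(k)}=b^{(k)}$ and $B^{(k)}=b^{(k)}G$. It follows that
\[
\Omega'^{(k)}=\sum_{j\in S_2}\lambda_{j,S_2}b_j^{(k)}A=b^{(k)}A.
\]

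\textbf{Bridging the two sides.} Substituting $B^{(k)}=b^{(k)}G$ and $A=aG$ yields
\[
aB^{(k)}=a b^{(k)}G=b^{(k)}aG=b^{(k)}A,
\]
using commutativity of multiplication in $\mathbb{Z}_q$ and associativity of the scalar action on $\mathbb{G}$.

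\textbf{Main obstacle.} The only subtle point is not the algebra but ensuring both sides really hold the same $B^{(k)}$, i.e.\ that the child key in the sender-visible descriptor matches the receiver's reconstructed key. I would handle this by noting that $B^{(k)}$ is a deterministic function of the public parent state $(B^{(k-1)},cc^{(k-1)})$ and the agreed tag $\mathsf{id}^{(k)}$. Under the agreed-session model of \S\ref{sec-threat-model}, both sides use identical inputs, so Lemma~\ref{lem:dkd-consistency} applies to the same value.

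I would also remark that the identity holds for any choice of qualified subsets $S_1,S_2$, since Lagrange reconstruction is independent of which qualified subset is used.
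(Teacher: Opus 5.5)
Your proposal is correct and takes essentially the same approach as the paper: Lagrange linearity on each side gives $\Omega^{(k)}=aB^{(k)}$ and $\Omega'^{(k)}=b^{(k)}A$, and commutativity of scalar multiplication, $ab^{(k)}G=b^{(k)}aG$, closes the argument. Your extra remarks, that the agreed-session inputs fix a common $B^{(k)}$ and that DLEQ checks ensure honest terms, are consistent with the paper, which treats these as standing assumptions rather than steps inside this lemma's proof.
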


\begin{prf}[Lemma~\ref{lem:shared-secret-consistency}]
On the sender side, each party $P_i$ computes $a_i B^{(k)}$ locally; the results are then combined via Lagrange coefficients:
\begin{align}
\Omega^{(k)}
&= \sum_{i \in S_1} \lambda_{i,S_1}\, (a_i B^{(k)})
= \Bigl(\sum_{i \in S_1} \lambda_{i,S_1}\, a_i\Bigr) B^{(k)} \nonumber\\
&= a \cdot B^{(k)} = a\, b^{(k)} G. \label{eq:omega-sender}
\end{align}
On the receiver side:
\begin{align}
\Omega'^{(k)}
&= \sum_{j \in S_2} \lambda_{j,S_2}\,(b_j^{(k)} A)
= \Bigl(\sum_{j \in S_2} \lambda_{j,S_2}\, b_j^{(k)}\Bigr) A \nonumber\\
&= b^{(k)} \cdot A = b^{(k)}\, a\, G. \label{eq:omega-receiver}
\end{align}
Since $a\,b^{(k)} G = b^{(k)}\,a\,G$ by the commutativity of scalar multiplication in $\mathbb{G}$, we conclude $\Omega^{(k)} = \Omega'^{(k)}$.
\qed
\end{prf}

\begin{lemma}[One-time key reconstruction]
\label{lem:otk-reconstruction}
Let $\rho^{(k)} = H(\Omega^{(k)} \parallel \xi^{(k)})$ and $d_j^{(k)} = b_j^{(k)} + \rho^{(k)}$. Then:
\[
D^{(k)} = \Big(\sum_{j \in S_2} \lambda_{j,S_2} d_j^{(k)}\Big) G.
\]
\end{lemma}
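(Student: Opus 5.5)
The plan is a direct algebraic computation that reuses the two preceding lemmas. The only subtlety is that the sender defines \(\rho^{(k)}\) from \(\Omega^{(k)}\), while the receiver computes it from \(\Omega'^{(k)}\). So I first fix a single offset. By Lemma~\ref{lem:shared-secret-consistency}, \(\Omega'^{(k)}=\Omega^{(k)}\). Both sides use the same public label \(\xi^{(k)}\) carried on chain. Since \(H\) is a deterministic function, the receiver-side value \(H(\Omega'^{(k)}\parallel\xi^{(k)})\) equals the sender-side \(\rho^{(k)}\). Hence the shares \(d_j^{(k)}=b_j^{(k)}+\rho^{(k)}\) use exactly the offset that went into \(D^{(k)}=B^{(k)}+\rho^{(k)}G\).

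Next I expand the Lagrange sum of the one-time shares, mirroring the computation in Lemma~\ref{lem:dkd-consistency}:
\[
\sum_{j\in S_2}\lambda_{j,S_2}d_j^{(k)}
=\sum_{j\in S_2}\lambda_{j,S_2}b_j^{(k)}+\rho^{(k)}\sum_{j\in S_2}\lambda_{j,S_2}
=b^{(k)}+\rho^{(k)}.
\]
This uses two facts. First, Lemma~\ref{lem:dkd-consistency} gives reconstruction of the child shares to \(b^{(k)}\). Second, the Lagrange coefficients over \(S_2\) sum to one, since interpolating the constant polynomial \(1\) at zero gives \(1\). Multiplying by \(G\) and using \(B^{(k)}=b^{(k)}G\), again from Lemma~\ref{lem:dkd-consistency}, yields \((b^{(k)}+\rho^{(k)})G=B^{(k)}+\rho^{(k)}G=D^{(k)}\). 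This is the claim.

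There is no real obstacle. The one step that needs care is the first: the receiver must plug in the same \(\rho^{(k)}\) that the sender used. That holds only because both sides derive the same shared secret and the same public \(\xi^{(k)}\), which is exactly what Lemma~\ref{lem:shared-secret-consistency} together with honest execution provides. I would also state that, as in Lemma~\ref{lem:dkd-consistency}, the receiver shares are assumed consistent with \(|S_2|\ge t_2\).
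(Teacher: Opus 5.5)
Your proposal is correct and follows the paper's proof. Like the paper, you expand $\sum_{j\in S_2}\lambda_{j,S_2}\bigl(b_j^{(k)}+\rho^{(k)}\bigr)$ into $b^{(k)}+\rho^{(k)}$ using $\sum_{j}\lambda_{j,S_2}=1$, multiply by $G$, and compare with $D^{(k)}=B^{(k)}+\rho^{(k)}G$. Your preliminary step, matching the receiver's offset via Lemma~\ref{lem:shared-secret-consistency}, is harmless but not needed, because the statement already defines $\rho^{(k)}$ from $\Omega^{(k)}$; the paper does that matching when it combines the lemmas in the proof of Theorem~\ref{thm:correctness}.
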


\begin{prf}[Lemma~\ref{lem:otk-reconstruction}]
Expanding the Lagrange sum:
\[
\sum_{j \in S_2} \lambda_{j,S_2} d_j^{(k)}
= \sum_{j \in S_2} \lambda_{j,S_2} (b_j^{(k)} + \rho^{(k)})
= b^{(k)} + \rho^{(k)}.
\]
From the sender-side definition:
$D^{(k)} = B^{(k)} + \rho^{(k)} G = b^{(k)} G + \rho^{(k)} G = (b^{(k)} + \rho^{(k)}) G$.
Hence $D^{(k)} = (\sum_j \lambda_{j,S_2} d_j^{(k)}) G$.
\qed
\end{prf}

\noindent\textit{Proof of Theorem~\ref{thm:correctness}.}
Combining Lemmas~\ref{lem:dkd-consistency}--\ref{lem:otk-reconstruction}: The derivation preserves the sharing structure (Lemma~\ref{lem:dkd-consistency}), both sides compute the same shared secret (Lemma~\ref{lem:shared-secret-consistency}), and the resulting one-time shares reconstruct the correct one-time public key (Lemma~\ref{lem:otk-reconstruction}). Therefore the receiver can invoke $\mathsf{TS.Sign}_{T_2}$ under public key $D^{(k)}$ using shares $\{d_j^{(k)}\}_{j \in T_2}$, producing a valid signature by the correctness of the threshold-signature instantiation (\S\ref{sec-2N}).

\smallskip
\noindent\textit{Remark (threshold-signing compatibility).}
\begin{revision}
The shares $\{d_j^{(k)}\}$ form a valid Shamir sharing of $d^{(k)}$, with $D_j^{(k)}=B_j^{(k)}+\rho^{(k)}G$. Before redemption, each party applies the additive-tweak adapter in \S\ref{sec-2N}, updates its public-key-dependent state, and creates fresh preprocessing for $D^{(k)}$. Key-independent setup may be reused, but nonces and presignatures are never reused across keys. The result is a standard ECDSA signature under $D^{(k)}$; the DKD and DSAG equations do not change.
\end{revision}

\begin{revision}
The agreed inputs determine the same public state \((B^{(k)},cc^{(k)})\) and tweak. Honest parties obtain consistent, party-specific child shares and compatible ECDSA states. They commit this successor after the same redemption transaction is confirmed, making the next derivation well-defined.
\end{revision}
\qed

\subsection{Proof of Theorem~\ref{thm:threshold-spending} (Spending Unforgeability)}
\label{prf:threshold-spending}

The protected keys are the sender key \(a\), receiver parent keys, and their derived spending keys. We first recall the secrecy of Shamir setup shares, then reduce unauthorized spending to the full signing-interface requirement. The latter reduction allows the adversary to know public tweaks and DH values disclosed in sessions it participates in.

\begin{lemma}[Sender-side key secrecy]
\label{lem:sender-secrecy}
Under DL, the setup view consisting of \(A\), public Shamir shares, and fewer than \(t_1\) scalar shares at fixed corrupted indices does not permit recovery of \(a\).
\end{lemma}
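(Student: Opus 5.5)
The plan is a direct reduction to the discrete-logarithm problem in \(\mathbb{G}\). The reduction embeds a DL challenge as the public key \(A\) and uses the fact that fewer than \(t_1\) Shamir shares carry no information about the secret. Let \(C_1\) with \(|C_1|<t_1\) be the fixed set of corrupted indices. Assume an adversary \(\mathcal{A}\) that, given the setup view, outputs \(a\) with non-negligible probability. We build a solver \(\mathcal{B}\) that, on input \(Y=yG\), produces \(y\).

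\(\mathcal{B}\) simulates the setup view as follows. It sets \(A=Y\). For each \(i\in C_1\) it samples a uniform scalar \(a_i\in_R\mathbb{Z}_q\) and sets \(A_i=a_iG\). The sharing polynomial \(f\) has degree \(t_1-1\) and \(f(0)=a\). Since \(|C_1|\le t_1-1\), we may pad \(C_1\) with arbitrary honest indices up to a set \(U\) of size \(t_1-1\), sample \(a_i\) uniformly for \(i\in U\), and then treat \(U\cup\{0\}\) as a set of \(t_1\) interpolation points. For every remaining index \(\ell\), the public share is obtained by Lagrange interpolation in the exponent:
\[
A_\ell=\lambda_{0,\ell}\,Y+\sum_{i\in U}\lambda_{i,\ell}\,a_iG,
\]
where \(\lambda_{\cdot,\ell}\) are the Lagrange coefficients that evaluate at \(\ell\) from the points \(U\cup\{0\}\). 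This step needs only public group operations.

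The next step is to show that this simulated view is distributed identically to an honest one. In honest Shamir setup, any \(t_1-1\) evaluations of a random polynomial are uniform and independent of \(f(0)\). Moreover, all remaining public shares are determined by \(A\) together with those evaluations. Hence \(\mathcal{A}\) sees exactly the honest distribution. If \(\mathcal{A}\) outputs \(a^*\) with \(a^*G=A\), then \(\mathcal{B}\) outputs \(a^*=y\). The DL advantage of \(\mathcal{B}\) therefore equals the success probability of \(\mathcal{A}\), which must be negligible.

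The main obstacle is fixing precisely what the setup view contains. If the threshold setup is a DKG such as Pedersen or GJKR with Feldman commitments, its transcript also includes commitments \(a_{i,m}G\) to the coefficients of each dealer's polynomial. In that case I would appeal to the known simulatability of the DKG for a static sub-threshold adversary (Gennaro et al.), under which the output public key can be forced to equal \(Y\). I would keep the lemma scoped to the output view \((A,\{A_\ell\},\{a_i\}_{i\in C_1})\), as stated, and defer transcript simulation to the consistent-setup assumption. The interpolation-in-the-exponent step above is routine.
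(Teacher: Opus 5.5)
Your proposal is correct and follows the paper's proof: both embed the DL challenge as \(A\), fix \(t_1-1\) uniform shares (the corrupted indices plus padding), and compute every other public share by interpolation in the exponent as a known affine combination \(\alpha_i A+\beta_i G\). Both also deliberately restrict the lemma to the setup output view rather than a full DKG transcript.
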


\begin{prf}[Lemma~\ref{lem:sender-secrecy}]
Given a DL challenge \(A=aG\), choose uniform scalar values at the corrupted indices and enough additional indices to obtain \(t_1-1\) points. Interpolation with the unknown constant \(a\) defines a uniformly shared key. Every share has the form \(\alpha_i a+\beta_i\), with known coefficients, so its public point is computable as \(\alpha_iA+\beta_iG\); corrupted scalar shares are known by construction. This simulates the setup view without knowing \(a\). Recovering \(a\) from that view solves the challenge. This lemma concerns the setup view; the full protocol view is covered by the signing-interface requirement.
\qed
\end{prf}

\begin{lemma}[One-time key secrecy]
\label{lem:otk-secrecy}
\begin{revision}
Under the full interface requirement in \S\ref{sec-2N}, fixed sub-threshold corrupted sets cannot recover an honestly protected base or one-time key from the permitted joint protocol view.
\end{revision}
\end{lemma}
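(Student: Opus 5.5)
The plan is a reduction: any adversary that recovers a protected key is turned into a forger against the adapted threshold-ECDSA interface of \S\ref{sec-2N}. Contradicting the interface requirement then gives the lemma. The argument does not need the offsets $\omega^{(k)}$ and $\rho^{(k)}$ to be secret. It relies only on the fact that every protected key is the base key plus a tweak $e$ that the adversary can compute from its view.

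First, we classify the protected keys. There are three kinds:
\begin{itemize}
\item the sender base key $a$;
\item a receiver lineage key $b^{(k)}=b^{(0)}+\Delta_k$, where $\Delta_k=\sum_{\ell\le k}\omega^{(\ell)}$ is computable from public states and tags;
\item a one-time key $d^{(k)}=b^{(k)}+\rho^{(k)}$.
\end{itemize}
In the last case, $\rho^{(k)}$ is either known to the adversary (it saw or participated in the DH session) or not. Second, we simulate. Let $\mathcal{B}$ be the reduction. It runs $\mathcal{A}$ inside the interface security game. $\mathcal{B}$ forwards the public shares, the corrupted states, and the signing transcripts obtained from the interface's signing oracle. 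It also supplies the DH terms and DLEQ proofs that DSAG discloses to corrupted participants. The interface requirement explicitly covers this joint view, so the simulation is perfect up to the event that some accepted DLEQ proof is false. By knowledge soundness, that event has negligible probability, and we condition it away.

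Third, we extract. Suppose $\mathcal{A}$ outputs $x^*$ with $x^*G=X'$ for a targeted base or derived key $X'$. Then $\mathcal{B}$ picks a fresh message $m^*$ that was never submitted to any signing session. It computes an ordinary ECDSA signature on $m^*$ under $x^*$ and outputs it. This signature verifies under $X'$, and no qualified subset authorized $m^*$, so it is a forgery in the sense of \S\ref{sec-2N}. Since the requirement quantifies over base and derived keys alike, we do not need to translate $x^*$ back to the base key. When translation is convenient, note that $x^*-e$ is the base secret whenever $e$ is known. Only the case $e=\rho^{(k)}$ with an unknown DH value differs. There the key is still a derived key in the interface's sense, because $\rho^{(k)}$ is a public tweak disclosed on redemption, so the same step applies directly.

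I expect the main obstacle to be the simulation claim in the second step: justifying that everything DKD and DSAG add lies inside the view the interface already tolerates. This includes the DLEQ proofs, which could leak correlations with honest shares across sessions. My first attempt is to use the zero-knowledge property to replace every honest DLEQ proof by a simulated one, programming the random oracle. After that replacement, the DH terms $a_iB^{(k)}$ and $b_j^{(k)}A$ from honest parties are exactly the DH values enumerated in the interface requirement. The honest public shares $B_j^{(k)}$ and $D_j^{(k)}$ are the base public shares shifted by public tweaks times $G$. The remaining loss is a hybrid sum of negligible terms over polynomially many sessions.
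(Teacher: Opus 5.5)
Your proposal is correct and is essentially the paper's proof: a recovered scalar for any protected key, including $d^{(k)}$ whether or not $\rho^{(k)}$ is known, lets the reduction sign a fresh, never-authorized message, and that signature is a forgery against the multi-key interface requirement of \S\ref{sec-2N}. Your classification step and DLEQ zero-knowledge replacement are redundant, since the interface requirement already covers the DH terms and proofs. One justification needs correcting: $\rho^{(k)}$ is not disclosed on redemption (publishing it would expose $B^{(k)}=D^{(k)}-\rho^{(k)}G$ and break privacy, and Step~2.3 erases it). $D^{(k)}$ counts as a derived key simply because Step~2.3 applies $\mathsf{TS.Tweak}$ with $e=\rho^{(k)}$.
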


\begin{prf}[Lemma~\ref{lem:otk-secrecy}]
\begin{revision}
If an adversary recovers a scalar \(x\) for a protected key \(X=xG\), it can run ordinary ECDSA to sign a fresh message that was never authorized under \(X\). This breaks the interface's multi-key unforgeability requirement. The reduction also applies to \(d^{(k)}\), regardless of whether \(\rho^{(k)}\) is known.
\end{revision}
\qed
\end{prf}

\noindent\textit{Proof of Theorem~\ref{thm:threshold-spending}.}
Condition on consistent setup and successful verification of only valid DH contributions. The outer protocol performs public deterministic derivations and invokes the signing interface with the resulting tweaks. Its corrupted-party view, including accepted DH terms and proofs, is among the joint views covered by that interface requirement. Forward all authorized signing requests with their session and key identifiers. An unauthorized output signature wins the interface forgery experiment directly; recovery of a scalar key yields a fresh forgery by Lemma~\ref{lem:otk-secrecy}.

Let \(\epsilon_{\mathrm{TS}}\) bound interface forgery over all polynomially many sessions and related keys, and let \(\epsilon_{\mathrm{chk}}\) bound invalid setup or proof acceptance. Then
\[
\Pr[\mathsf{Exp}^{\mathrm{spend\text{-}uf}}_{\mathcal A}(1^\kappa)=1]
\leq \epsilon_{\mathrm{TS}}+\epsilon_{\mathrm{chk}}
=\mathsf{negl}(\kappa).
\]
\qed

\subsection{Proof of Theorem~\ref{thm:privacy} (Recipient Indistinguishability)}
\label{prf:privacy}

We prove privacy for the external observer of Definition~3. It sees the two public root states and the signed payment transcripts, but no private shares or off-chain session records. The payment fields other than the destination are the same in both challenge cases. We use independent random oracles \(H\) for stealth offsets and \(H_{\mathrm{sig}}\) for ECDSA message digests.

\begin{lemma}[Public-destination indistinguishability]
\label{lem:stealth-prf}
Under DDH and the random-oracle model for \(H\), the public destination in Definition~3 hides the chosen receiver, even given both candidate public root states and \(A\).
\end{lemma}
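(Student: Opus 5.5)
We prove the lemma by a short hybrid argument. We isolate the destination \(D_b^{(k)}=B_b^{(k)}+H(aB_b^{(k)}\parallel\xi^{(k)})G\) and show that, in the adversary's view, it is computationally close to a uniform group element independent of \(b\). Both candidate child keys \(B_c^{(k)}=B_c^{(0)}+\Delta_{c,k}G\) are publicly computable from \(st_c^{\mathrm{pub}}\) and the tags, where \(\Delta_{c,k}\) is the sum of the HMAC offsets along lineage \(c\). The only secret input is therefore the DH value
\[
aB_c^{(k)}=aB_c^{(0)}+\Delta_{c,k}A .
\]

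\emph{Hybrid 0 to Hybrid 1 (DDH).} We embed a DDH tuple \((G,X,Y,Z)\), setting \(A=X\) and \(B_b^{(0)}=Y\). The other root \(B_{1-b}^{(0)}\) is sampled honestly. Because the experiment has honest subsets and an external observer, no shares need to be simulated. Each descendant DH value is computed as \(Z+\Delta_{b,k}X\). If \(Z=xyG\) this is exactly Hybrid~0. If \(Z\) is uniform, the root DH value has been replaced by a random point \(T\), and \(T+\Delta_{b,k}A\) is used consistently at every descendant. The distinguishing gap is bounded by the DDH advantage. This consistent translation is what lets us handle related DH values along a lineage without assuming they are independent. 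The same embedding, applied with a fresh random self-reduction per root, covers both candidate roots simultaneously.

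\emph{Hybrid 1 to Hybrid 2 (random oracle).} In Hybrid~1 the hash input \(T+\Delta_{b,k}A\) is uniformly distributed and hidden from the adversary. An adversary can query \(H\) at this point only with probability at most \(Q_H/q\), since \(T\) is information-theoretically hidden. The label \(\xi^{(k)}\) is fresh, so the input was never queried during earlier derivations either, except with probability \(Q_H2^{-256}\). Outside these bad events, \(\rho^{(k)}\) is uniform and independent of everything else. Hence \(D_b^{(k)}=B_b^{(k)}+\rho^{(k)}G\) is a uniform point independent of \(b\), \(A\), and both root states, and the adversary's advantage in this hybrid is zero.

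Summing the hybrids gives an advantage of at most \(\mathrm{Adv}^{\mathrm{DDH}}+(Q_H+1)/q+Q_H2^{-256}\), which is negligible. The main obstacle is the first hybrid: programming the reduction so that the DDH challenge is consistent with the publicly fixed \(A\) and the public roots while still producing every descendant DH value. The additive DKD structure \(B^{(k)}=B^{(0)}+\Delta_kG\) is what makes this possible, because \(aB^{(k)}\) is an affine function of the single unknown \(aB^{(0)}\).
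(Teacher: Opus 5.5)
Your proof is correct and follows essentially the same route as the paper. Both write \(aB_c^{(k)}=aB_c^{(0)}+\Delta_{c,k}A\), embed one DDH tuple at the challenge root and propagate it consistently to descendants as \(T+\Delta_{c,k}A\), and then use the random oracle (a hidden-input hit has probability at most \(Q_H/q\)) to make \(D^{(k)}\) uniform and independent of the receiver bit; the only cosmetic difference is that you mention a random self-reduction to cover both roots, whereas the paper simply generates the non-embedded root with a known scalar, and either works.
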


\begin{prf}[Lemma~\ref{lem:stealth-prf}]
For each candidate lineage, write its advertised parent as \(B_c^{(0)}\). All subsequent offsets are computable from this public state and the tags, so
\[
B_c^{(k)}=B_c^{(0)}+\Delta_{c,k}G,\qquad
aB_c^{(k)}=aB_c^{(0)}+\Delta_{c,k}A.
\]
Here \(\Delta_{c,k}\) is the sum of the intervening DKD offsets. Embed a DDH tuple \((G,A,B_c^{(0)},T_c)\) at one root. If \(T_c=aB_c^{(0)}\), using \(T_c+\Delta_{c,k}A\) gives the real DH values. If \(T_c\) is uniform, it gives the root-replacement hybrid. The other root can be generated with a known scalar, allowing its DH values to be computed from \(A\). Thus this transition is justified by DDH even though descendant keys are related.

In the random-root hybrid, a query at the hidden input
\((T_c+\Delta_{c,k}A)\parallel\xi^{(k)}\) is a guess of \(T_c\). Before any such hit, its hash value is fresh and uniform. Consequently \(D_c^{(k)}=B_c^{(k)}+H((T_c+\Delta_{c,k}A)\parallel\xi^{(k)})G\) is uniform and independent of the receiver bit. For one session and \(Q_H\) queries, a hit has probability at most \(Q_H/q\). The uniform-destination transition is therefore statistical up to this bad event, rather than perfect. Applying the root replacement to either candidate proves the lemma.
\qed
\end{prf}

\begin{lemma}[Signed-transcript indistinguishability]
\label{lem:public-transcript-reduction}
Under DDH, independent random oracles \(H,H_{\mathrm{sig}}\), and honest randomized-ECDSA outputs, the complete signed payment transcript in Definition~3 hides the receiver.
\end{lemma}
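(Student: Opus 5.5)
The plan is a short hybrid argument. It starts from the real experiment with bit \(b\) and ends in a hybrid whose distribution does not depend on \(b\). The challenge transcript is \(\tau_{\mathrm{chain},b}^{(k)}=(m_{\mathrm{pay},b}^{(k)},\sigma_{\mathrm{pay},b}^{(k)},D_b^{(k)},\mathsf{id}^{(k)},\xi^{(k)})\). The labels \(\mathsf{id}^{(k)},\xi^{(k)}\) are freshly sampled independently of \(b\), and all non-recipient payment fields are fixed identically. So the only \(b\)-dependent parts are \(D_b^{(k)}\), the recipient field of \(m_{\mathrm{pay},b}^{(k)}\) (a deterministic function of \(D_b^{(k)}\)), and \(\sigma_{\mathrm{pay},b}^{(k)}\).

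\textbf{Hybrid 1 (simulate the signature).} First I remove the dependence of \(\sigma_{\mathrm{pay}}\) on the secret \(a\), which is also the secret used inside the DH value \(aB_b^{(k)}\). I will use the standard random-oracle simulation of randomized ECDSA from the public key \(A\) alone:
\begin{itemize}
\item sample \(u_1,u_2\in_R\mathbb{Z}_q\) and set \(R=u_1G+u_2A\);
\item let \(r=x(R)\bmod q\), and retry if \(u_2=0\) or \(r=0\);
\item set \(s=r u_2^{-1}\) and program \(H_{\mathrm{sig}}(m_{\mathrm{pay}})=s u_1\).
\end{itemize}
This gives \(s^{-1}(H_{\mathrm{sig}}(m)G+rA)=u_2 r^{-1}(su_1G+rA)\), which checks out: it reproduces the honest ECDSA distribution, with a uniform nonce and uniform digest. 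The programming fails only if \(m_{\mathrm{pay}}\) was already queried to \(H_{\mathrm{sig}}\). Before the challenge, the adversary has no information about \(m_{\mathrm{pay}}\) beyond its fixed fields, but \(m_{\mathrm{pay}}\) contains the fresh \(\xi^{(k)}\) and \(D^{(k)}\). Hence this failure occurs with probability at most \(Q_{\mathrm{sig}}/2^{256}+Q_{\mathrm{sig}}/q\). Because \(H\) and \(H_{\mathrm{sig}}\) are independent oracles, the programming does not interfere with the \(H\)-queries used below. After this hybrid, the transcript is computable from \(A\), the labels, and \(D_b^{(k)}\) alone, with no knowledge of \(a\).

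\textbf{Hybrid 2 (destination).} Now the transcript is a public, efficiently computable function \(F(A,D_b^{(k)},\mathsf{id}^{(k)},\xi^{(k)})\) plus the simulator's coins. I will invoke Lemma~\ref{lem:stealth-prf} as a black box. Its reduction already works given only \(A\) and both public roots, embedding the DDH challenge \((G,A,B_c^{(0)},T_c)\) and using \(T_c+\Delta_{c,k}A\) at descendants. That reduction never needs \(a\), and after Hybrid~1 neither does the signature simulation. So the DDH reduction can run the signature simulator itself, and the whole transcript is replaced by one built from a destination that is uniform up to the hit event of probability \(Q_H/q\).

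The remaining work is to sum the losses: \(2\,\mathsf{Adv}^{\mathrm{DDH}}+Q_H/q+O(Q_{\mathrm{sig}}/q)\), which is negligible. For the multi-payment claim of Definition~3, I repeat Hybrid~1 per payment (each with a fresh \(\xi\)) and defer to Lemma~\ref{lem:multi-tx-unlink} for the per-lineage root replacement.

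The main obstacle is the interplay in Hybrid~1. The simulated signature must be produced without \(a\) while the DDH challenge is embedded with the same \(A\). It must also be checked that programming \(H_{\mathrm{sig}}\) at a message containing \(D_b^{(k)}\) stays consistent whether \(D_b^{(k)}\) is real or random. Both go through because the simulator takes \(D\) as an input and never inspects how it was produced.
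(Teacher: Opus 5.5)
Your proposal is correct and follows essentially the same route as the paper. You simulate the randomized-ECDSA payment signature from \(A\) alone via \(R=uG+vA\) with \(H_{\mathrm{sig}}\) programmed at the fresh payment message, and you run that simulator inside the DDH root-replacement reduction of Lemma~\ref{lem:stealth-prf}, which makes the destination uniform; the only minor difference is that you bound the programming-failure event directly in the real experiment using the freshness of \(\xi^{(k)}\) (and hence of \(D^{(k)}\)), whereas the paper bounds it in the random-root hybrid and transfers that bound to the real-root case because the abort flag is an efficient function of the DDH tuple.
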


\begin{prf}[Lemma~\ref{lem:public-transcript-reduction}]
The sender key is also used in DH, so its signature cannot be treated as public post-processing of \(D\). We instead simulate its output using only \(A\). For a new payment message \(m\), choose \(u\in_R\mathbb Z_q\) and \(v\in_R\mathbb Z_q^*\), and set
\[
R=uG+vA,\quad r=x(R)\bmod q,\quad
s=rv^{-1},\quad h=us.
\]
Retry if \(R\) is the identity or \(r=0\). If \(H_{\mathrm{sig}}(m)\) is unset, program it to \(h\) and return \((r,s)\). Verification holds because
\[
h s^{-1}G+r s^{-1}A=uG+vA=R.
\]
For the unknown scalar \(a\), put \(\nu=u+va\). Then \(R=\nu G\) and \(s=\nu^{-1}(h+ra)\), the ordinary ECDSA signing equation. The change of variables between \((u,v)\) and \((\nu,h)\) gives the honest random-nonce, random-digest distribution up to negligible differences from zero-value retries.

Abort the simulation if \(H_{\mathrm{sig}}(m)\) was already queried. In the random-root hybrid, each new message contains a fresh uniform destination until a hidden-DH query occurs. Among \(Q_s\) payments and \(Q_{\mathrm{sig}}\) digest queries, a prior-message hit or message repetition has probability
\(O((Q_sQ_{\mathrm{sig}}+Q_s^2)/q)\).
The DDH reduction runs this same simulator in both tuple cases, including its abort rule. Its outputs and bad-event flag are efficient functions of the supplied tuple. Thus the abort probability in the real-root case differs from that in the random-root case by at most the DDH advantage. Before the first abort, the simulated and honest signature views have the same distribution up to the retry term.

All other payment fields are fixed across the receiver choices. Combining the DDH replacement, signature simulation, and the negligible bad events proves signed-transcript indistinguishability without knowing \(a\).
\qed
\end{prf}

\begin{lemma}[Multi-transaction unlinkability]
\label{lem:multi-tx-unlink}
The external observer has negligible advantage in the multi-payment experiment of Definition~3.
\end{lemma}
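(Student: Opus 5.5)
The multi-payment experiment compares two worlds. In one, the honest sender makes two (more generally, polynomially many) payments to a single challenge lineage. In the other, it makes payments to the two independent challenge lineages. I will prove indistinguishability by a hybrid argument in which both worlds collapse to a common ideal world where every destination is a fresh uniform point and every signature is simulated from \(A\) alone. The plan extends the proofs of Lemmas~\ref{lem:stealth-prf} and~\ref{lem:public-transcript-reduction} from one session to many. The key point is that related DH values along one lineage are handled by the root replacement, not by treating them as independent.

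The hybrids go as follows. First, replace every payment signature by the simulator of Lemma~\ref{lem:public-transcript-reduction}, programming \(H_{\mathrm{sig}}\) at each fresh message. The loss is the retry term plus the collision term \(O((Q_sQ_{\mathrm{sig}}+Q_s^2)/q)\), with \(Q_s\) now the total number of payments. After this step \(a\) is used only through the DH values \(aB_c^{(k)}\). Next, for each challenge lineage \(c\in\{0,1\}\) in turn, embed a DDH tuple \((G,A,B_c^{(0)},T_c)\) at its root. Every descendant DH value is then computed as \(T_c+\Delta_{c,k}A\), where \(\Delta_{c,k}\) is the public sum of DKD offsets. The other lineage is generated with a known root scalar, so its DH values are computable from \(A\). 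Each replacement costs one DDH advantage, which gives \(2\epsilon_{\mathrm{DDH}}\) overall for the two-lineage world and \(\epsilon_{\mathrm{DDH}}\) for the one-lineage world. The abort flag of the signature simulator is carried along exactly as in Lemma~\ref{lem:public-transcript-reduction}.

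In the random-root hybrid, every stealth-hash input has the form \((T_c+\Delta_{c,k}A)\parallel\xi^{(k)}\), and the labels \(\xi^{(k)}\) are fresh. Distinct sessions therefore query \(H\) at distinct points except with probability about \(Q_s^2/2^{256}\). Each \(\rho^{(k)}\) is thus uniform unless the adversary queries a hidden input, and any such query reveals \(T_c\). A union bound over \(Q_H\) queries and \(Q_s\) sessions bounds this bad event by \(Q_HQ_s/q\). Away from the bad events, every \(D^{(k)}\) is an independent uniform point, the signatures depend only on \(A\) and the messages, and the labels are fresh. The transcripts of the two worlds are therefore identically distributed. Summing the losses gives advantage at most \(3\epsilon_{\mathrm{DDH}}\) plus negligible statistical terms.

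I expect the main obstacle to be the bookkeeping of the signature-simulator abort events across the DDH hybrid. The abort rule depends on prior digest queries, and those queries may include messages whose destinations are still real DH-derived. To handle this, I will fix a single simulator that both tuple cases run identically, so that the abort probability is an efficient function of the tuple, and argue freshness of the messages only in the final random-root hybrid.
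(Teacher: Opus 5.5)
Your proposal is correct and takes essentially the same route as the paper. Both arguments use one DDH root replacement per challenge lineage, propagated to descendants as $T_c+\Delta_{c,k}A$, and run the $H_{\mathrm{sig}}$-programming simulator of Lemma~\ref{lem:public-transcript-reduction} identically in both tuple cases, carrying its abort flag through. Fresh labels then make every destination independently uniform in both worlds, giving the same $O(L\epsilon_{\mathrm{DDH}}+Q_s(Q_H+Q_{\mathrm{sig}}+Q_s)/q+Q_s^2/2^{256})$ bound; the only difference is cosmetic, since you isolate signature simulation as a separate first hybrid while the paper folds it into each DDH transition.
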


\begin{prf}[Lemma~\ref{lem:multi-tx-unlink}]
Replace each challenge lineage's root DH value once. For every descendant on that lineage, use the same hidden root \(T_c\) and the value \(T_c+\Delta_{c,k}A\). This preserves the algebraic relations between sessions; we never replace related DH values independently. Each transition is a DDH reduction with the signature simulator of Lemma~\ref{lem:public-transcript-reduction}.

Fresh labels give distinct stealth-hash inputs except with birthday probability \(O(Q_s^2/2^{256})\). Before a hidden-input hit, their random-oracle outputs are independent, even though the underlying DH points are related. Hence all destinations are independently uniform in the final hybrid, whether the payments target one lineage or two. The simulated signatures and receiver-independent message fields preserve this equality of distributions. For \(L\) replaced roots, the total distinguishing advantage is bounded by
\[
O\!\left(L\epsilon_{\mathrm{DDH}}
+\frac{Q_s(Q_H+Q_{\mathrm{sig}}+Q_s)}{q}
+\frac{Q_s^2}{2^{256}}\right),
\]
including the negligible ECDSA retry terms. All query and session counts are polynomial, so this bound is negligible.
\qed
\end{prf}

\noindent\textit{Proof of Theorem~\ref{thm:privacy}.}
Lemma~\ref{lem:public-transcript-reduction} establishes the single-payment claim with the actual signed chain view. Lemma~\ref{lem:multi-tx-unlink} establishes unlinkability for the same observer over multiple payments. Neither claim includes an observer that obtained session DH values or receiver-identifying off-chain records.
\qed

\subsection{Proof of Theorem~\ref{thm:robustness} (Byzantine Robustness)}
\label{prf:robustness}

Robustness concerns acceptance safety under the agreed-session model. We show that invalid contributions are rejected and that honest parties accepting the same confirmed session commit the same successor state. A failed check causes a visible abort; message delays or insufficient participation may prevent completion.

\begin{lemma}[Setup robustness]
\label{lem:dkg-robust}
If the initial threshold keys of $\mathsf{DAO}_1$ and $\mathsf{DAO}_2$ are generated by a robust verifiable DKG, then fewer than the corresponding threshold of malicious participants cannot cause honest parties to output inconsistent shares or an invalid aggregate public key.
\end{lemma}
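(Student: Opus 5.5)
The argument runs through the standard properties of a robust verifiable DKG, in the style of Pedersen/Feldman-VSS with complaint handling as in Gennaro et al.~\cite{gennaro2007dkg}. I will treat each DAO separately, fixing $\mathsf{DAO}_\ell$ with corrupted set $C_\ell$, $|C_\ell|<t_\ell$, since the two setups are independent. Each dealer $P_i$ shares a random polynomial $f_i$ of degree $t_\ell-1$ and broadcasts Feldman commitments $F_{i,m}=c_{i,m}G$ to its coefficients. The first step is a per-share check: an honest recipient $j$ accepts the private share $s_{ij}$ only if $s_{ij}G=\sum_m j^m F_{i,m}$. Otherwise it broadcasts a complaint, and the dealer must publicly open the disputed share or be disqualified. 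Reliable broadcast within the subset (\S\ref{sec-threat-model}) guarantees that all honest parties see the same commitments and complaints. Each honest party then computes the same qualified dealer set $\mathsf{QUAL}$ by a deterministic rule.

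Next I will show that every dealer in $\mathsf{QUAL}$ has delivered, to each honest party, a share lying on the unique degree-$(t_\ell-1)$ polynomial determined by its commitments. For honest dealers this is immediate. For a corrupted dealer, any inconsistent share triggers either a public opening that is itself checked against the commitments, or disqualification. Because the commitments are perfectly binding in the exponent, an accepted share equals $f_i(j)$ for the committed $f_i$. Each honest party's final share is $x_j=\sum_{i\in\mathsf{QUAL}}f_i(j)$, and its public share is $X_j=\sum_{i\in\mathsf{QUAL}}\sum_m j^mF_{i,m}$. The aggregate key is $X=\sum_{i\in\mathsf{QUAL}}F_{i,0}$. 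All three are computed from identical broadcast data, so honest parties agree on $X$ and on every $X_j$.

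Then I will prove validity. The honest shares lie on the single polynomial $f=\sum_{i\in\mathsf{QUAL}}f_i$, which has degree at most $t_\ell-1$. Hence for any qualified $S$, $\sum_{j\in S}\lambda_{j,S}x_j=f(0)=x$, and by linearity of the commitments $xG=X$. So the aggregate key is consistent with the shares, and the reconstruction relation used by Lemma~\ref{lem:dkd-consistency} and the DSAG equations holds from epoch $0$.

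The main obstacle is the complaint phase. A corrupted dealer might send bad shares to some honest parties yet avoid disqualification, or a corrupted complainer might frame an honest dealer. I will handle the first case with the rule that a dealer facing an unresolved or incorrectly answered complaint is excluded by all honest parties, which they can apply uniformly because the complaints are broadcast. For the second case, an honest dealer always answers a complaint with a correct opening, so it is never excluded and $\mathsf{QUAL}$ is nonempty. I will only claim consistency, not uniformity of $x$. If the adversary can bias the distribution of the key, as in plain Pedersen DKG, that concerns secrecy rather than the consistency statement proved here. The lemma thus reduces to the robustness guarantee of the cited verifiable DKG, and the probability of failure is zero once the commitments are binding.
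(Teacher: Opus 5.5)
Your proposal is correct and takes essentially the same approach as the paper. The paper also inherits the lemma from a Feldman-style DKG with the per-share check $s_{ij}G=\sum_{\ell}j^{\ell}C_{i,\ell}$ and complaint-driven exclusion of faulty dealers. You add details the paper leaves implicit: public opening so that corrupted complainers cannot frame honest dealers, agreement on $\mathsf{QUAL}$ via reliable broadcast, and the linearity argument that $\sum_{j\in S}\lambda_{j,S}x_j=x$ with $xG=X$.
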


\begin{prf}[Lemma~\ref{lem:dkg-robust}]
This lemma is inherited from the setup procedure. In a Feldman-style DKG, dealer \(P_i\) samples \(f_i(z)=\sum_{\ell=0}^{t-1}c_{i,\ell}z^\ell\), publishes commitments \(C_{i,\ell}=c_{i,\ell}G\), and sends \(s_{ij}=f_i(j)\) to \(P_j\). Every honest receiver verifies
\[
s_{ij}G\stackrel{?}{=}\sum_{\ell=0}^{t-1}j^\ell C_{i,\ell}.
\]
For \(t=2\), this becomes \(s_{ij}G=C_{i,0}+jC_{i,1}\). A failed check triggers a complaint and excludes the faulty dealer. Any robust verifiable DKG with these properties suffices; the online protocol uses only its consistent shares and aggregate public key.
\qed
\end{prf}

\begin{lemma}[Derivation state consistency]
\label{lem:derivation-consistency}
For the same agreed parent state and derivation tag, all honest parties compute the same child public key \(B^{(k)}\) and chaincode \(cc^{(k)}\).
\end{lemma}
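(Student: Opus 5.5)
The argument is determinism: every input to the derivation is public and agreed in advance, and the derivation rule is a deterministic function of those inputs. First, I fix what ``agreed parent state and derivation tag'' means under the session model of \S\ref{sec-threat-model}. Before the session starts, honest parties agree on the session identifier, the parent state \((B^{(k-1)},cc^{(k-1)})\), the tag \(\mathsf{id}^{(k)}\), and the participant sets. These are therefore common public inputs, not values a corrupted party can feed differently to different honest parties. For the base case \(k=1\), I will invoke Lemma~\ref{lem:dkg-robust}: the root public key \(B^{(0)}\) and chain code \(cc^{(0)}\) come out of robust verifiable setup and are identical for all honest parties.

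Next I apply the DKD rule of \S\ref{sec-dkd1}. Each honest party evaluates
\[
(\omega^{(k)},cc^{(k)})=\text{HMAC-SHA512}\bigl(cc^{(k-1)},B^{(k-1)}\parallel\mathsf{id}^{(k)}\bigr)
\]
on the same byte string and the same key. Since HMAC-SHA512 is a deterministic function, all honest parties obtain the same 512-bit output. Splitting it into halves and reducing the first half modulo \(q\) are also deterministic, so they all obtain the same \(\omega^{(k)}\) and \(cc^{(k)}\). Then \(B^{(k)}=B^{(k-1)}+\omega^{(k)}G\) is a fixed group operation on identical operands, which gives equality of child public keys. Lemma~\ref{lem:dkd-consistency} additionally shows that this common \(B^{(k)}\) equals \(b^{(k)}G\) for the Lagrange reconstruction of the honest child shares, so the public value is consistent with the secret sharing.

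To reach arbitrary depth, I will induct on \(k\). The serialization rule of \S\ref{sec:e2e-protocol} says index \(k+1\) is allocated only after transaction \(k\) is committed or abandoned, and commitment happens only after confirmation under the shared finality rule. This ensures that the parent state used at step \(k+1\) is the same committed successor for every honest party; abandoned sessions leave the parent unchanged.

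I expect the main obstacle to be ruling out adversarial influence on these inputs: a corrupted member might broadcast a different \(B^{(k-1)}\), \(cc^{(k-1)}\), or \(\mathsf{id}^{(k)}\), or try to commit a different successor. I will handle this through the agreed-session assumption and reliable broadcast. No honest party ever takes these values from a single peer's message; each recomputes them locally from its own committed state. Any value reported by a peer that disagrees is simply rejected, causing a visible abort rather than an inconsistency. A point to state explicitly is the encoding: \(B^{(k-1)}\parallel\mathsf{id}^{(k)}\) must be serialized canonically, for example as a compressed point followed by a fixed-length tag, so that equal values give equal HMAC inputs.
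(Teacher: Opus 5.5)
Your proposal matches the paper's proof: input agreement comes from the session model rather than from hashing, HMAC-SHA512 and the additive update are deterministic on those common inputs, and commit-after-confirmation together with serialization carries agreement inductively across sessions. The canonical-encoding remark and the link to Lemma~\ref{lem:dkd-consistency} are harmless additions. One small correction: the DKG of Lemma~\ref{lem:dkg-robust} outputs shares and an aggregate key but no chain code, so agreement on \(cc^{(0)}\) also rests on the agreed-state premise rather than on that lemma.
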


\begin{prf}[Lemma~\ref{lem:derivation-consistency}]
Input agreement is a premise of \S\ref{sec-threat-model}, rather than a consequence of deterministic hashing. Given those common inputs, HMAC-SHA512 and the additive update yield identical public child states. Parties commit only after the same agreed transaction is confirmed. Induction therefore preserves state agreement across committed sessions.
\qed
\end{prf}

\begin{lemma}[Partial-DH consistency]
\label{lem:partial-dh-consistency}
\begin{revision}
Except with negligible probability, every accepted sender term equals \(a_iB^{(k)}\), and every accepted receiver term equals \(b_j^{(k)}A\).
\end{revision}
\end{lemma}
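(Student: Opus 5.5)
The plan is to reduce the lemma directly to knowledge soundness of the DLEQ proofs, which is assumed in \S\ref{sec-preli}. The argument covers one sender term and one receiver term, then extends to all terms by a union bound over the polynomially many accepted terms in the execution.

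For a sender term, an accepted $\Omega_i^{(k)}$ arrives with a proof $\pi_i^S$ that verifies under $\mathsf{DLEQ.Verify}(G,A_i;B^{(k)},\Omega_i^{(k)};\pi_i^S)$. The statement elements are fixed before the proof is checked:
\begin{itemize}
\item $A_i$ is the registered public share, fixed by setup (Lemma~\ref{lem:dkg-robust});
\item $B^{(k)}$ is fixed by the agreed parent state and tag (Lemma~\ref{lem:derivation-consistency}).
\end{itemize}
The Fiat--Shamir challenge binds the proof to this session and these four group elements, so a proof cannot be replayed from another session or another statement.

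From any adversary that gets a false term accepted, I will build an extractor. I will assume knowledge soundness in the random-oracle model, for example via rewinding and forking. The extractor outputs a witness $x$ with $xG=A_i$ and $xB^{(k)}=\Omega_i^{(k)}$. Since $G$ generates a prime-order group, $xG=A_i=a_iG$ forces $x=a_i$, and hence $\Omega_i^{(k)}=a_iB^{(k)}$.

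So if an accepted term differs from $a_iB^{(k)}$, either the proof verifies without any valid witness existing, or extraction fails. Both events are bounded by the knowledge-soundness error. The receiver case is symmetric. The statement is $(G,B_j^{(k)};A,\Omega_j'^{(k)})$, and $B_j^{(k)}=B_j^{(k-1)}+\omega^{(k)}G$ is publicly determined from consistent setup shares and the agreed $\omega^{(k)}$. The extracted witness therefore equals $b_j^{(k)}$, and the term equals $b_j^{(k)}A$.

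The main obstacle is extraction in a concurrent, multi-proof setting. Rewinding for one proof must not disturb the other sessions or the adversary's adaptivity. I would handle this by relying on the simulation-sound or straight-line properties that the Fiat--Shamir session binding provides, or alternatively by guessing which accepted proof is the false one, at a polynomial loss. Either way, the total failure probability is at most (number of accepted terms) times $\epsilon_{\mathrm{DLEQ}}$, which is negligible. A second subtlety is that honest parties must agree on the public shares $A_i$ and $B_j^{(k)}$ used in verification; I will take this from Lemmas~\ref{lem:dkg-robust} and~\ref{lem:derivation-consistency}.
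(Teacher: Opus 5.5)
Your proposal is correct and follows the same route as the paper: DLEQ soundness for $(G,A_i;B^{(k)},\Omega_i^{(k)})$ (resp.\ $(G,B_j^{(k)};A,\Omega_j'^{(k)})$), together with uniqueness of the discrete log of the registered public share in the prime-order group, forces every accepted term to equal $a_iB^{(k)}$ (resp.\ $b_j^{(k)}A$). The concurrent-extraction obstacle you raise does not actually arise. The DLEQ statement is true exactly when the term is correct, so plain statistical soundness of the Fiat--Shamir proof in the random-oracle model already bounds acceptance of a false term by about $Q_H/q$ per proof, with no rewinding, and your union bound over accepted terms then finishes the argument.
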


\begin{prf}[Lemma~\ref{lem:partial-dh-consistency}]
\begin{revision}
DLEQ verification links \((G,A_i)\) to \((B^{(k)},\Omega_i^{(k)})\). Since \(A_i=a_iG\), knowledge soundness gives \(\Omega_i^{(k)}=a_iB^{(k)}\). The same argument gives \(\Omega_j'^{(k)}=b_j^{(k)}A\) on the receiver side. Invalid terms are rejected before aggregation.
\end{revision}
\qed
\end{prf}

\begin{lemma}[One-time share verification]
\label{lem:share-verification}
\begin{revision}
Checking each $D_j^{(k)}$ before the aggregate equation identifies any party that publishes an inconsistent share.
\end{revision}
\end{lemma}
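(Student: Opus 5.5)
The check in Step~2.2(d) is a per-share test, so the plan is to argue that it depends only on public data and that a deviating share fails it deterministically. I will not need any computational assumption beyond the knowledge soundness already used in Lemma~\ref{lem:partial-dh-consistency}.

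\emph{Public reference values.} The first step is to show that every honest verifier holds the same reference value $B_j^{(k)}+\rho^{(k)}G$ for each $j\in S_2$.
\begin{itemize}
\item Each $B_j^{(k)}$ equals $B_j^{(k-1)}+\omega^{(k)}G$. Here $B_j^{(k-1)}$ comes from the consistent setup of Lemma~\ref{lem:dkg-robust}, and $\omega^{(k)}$ is the common offset from Lemma~\ref{lem:derivation-consistency}.
\item The offset $\rho^{(k)}=H(\Omega'^{(k)}\parallel\xi^{(k)})$ is also common to all honest verifiers. By Lemma~\ref{lem:partial-dh-consistency}, except with negligible probability, every accepted term $\Omega_j'^{(k)}$ equals $b_j^{(k)}A$. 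Hence the aggregate $\Omega'^{(k)}$ is the unique value $b^{(k)}A$, and every honest party derives the same $\rho^{(k)}$.
\end{itemize}

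\emph{Deviating shares are caught and attributed.} Suppose party $j$ publishes a value $\tilde D_j\neq B_j^{(k)}+\rho^{(k)}G$. Then the per-share equality fails at every honest verifier. Because each check is indexed by $j$ and uses only that party's registered public share, the failure points to $P_j$ and no other party. This identification is perfect and involves no probability. Conversely, any share that passes equals $d_j^{(k)}G$ with $d_j^{(k)}=b_j^{(k)}+\rho^{(k)}$, so an honest party is never falsely accused.

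\emph{The aggregate check.} After the inconsistent shares are removed, the remaining verified set $S_2'$ may still be qualified. In that case Lemma~\ref{lem:otk-reconstruction} applied to $S_2'$ gives $\sum_{j\in S_2'}\lambda_{j,S_2'}D_j^{(k)}=D^{(k)}$, so the aggregate equation holds automatically. If $S_2'$ is not qualified, the protocol aborts visibly, which Definition~4 allows.

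\emph{Why order matters.} The aggregate equation on its own cannot attribute blame. Two corrupted deviations could in principle cancel in the Lagrange sum; for $t_2=2$ only one party is corrupted, but the per-share check makes this question moot. This observation is what justifies running the per-share checks first.

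The main obstacle is the first step: establishing that the reference values are genuinely common and correct. In particular, a malicious party might try to bias the common $\rho^{(k)}$ through a false $\Omega_j'^{(k)}$. I would rule this out by appealing to DLEQ knowledge soundness via Lemma~\ref{lem:partial-dh-consistency}, and to reliable broadcast within the participating subset, so that all honest parties aggregate the same verified terms.
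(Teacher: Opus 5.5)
Your proposal is correct and follows the paper's argument. The expected point $B_j^{(k)}+\rho^{(k)}G$ is public after detection, any deviating $\widetilde D_j^{(k)}$ fails its indexed check and identifies $j$, and when all checks pass, linearity with $\sum_j\lambda_{j,S_2}=1$ yields $D^{(k)}$. Your justification that the reference values are common (via derivation consistency, DLEQ soundness and broadcast) and your handling of the reduced qualified set $S_2'$ make explicit what the paper takes as given; the paper also notes that later scalar contributions that do not match the accepted public share are left to the threshold-ECDSA layer.
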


\begin{prf}[Lemma~\ref{lem:share-verification}]
\begin{revision}
The expected point $B_j^{(k)}+\rho^{(k)}G$ is public after detection. Any different $\widetilde D_j^{(k)}$ fails the per-share check and identifies $j$. If all checks pass, linearity and $\sum_j\lambda_{j,S_2}=1$ give
\[
\sum_{j\in S_2}\lambda_{j,S_2}D_j^{(k)}
=B^{(k)}+\rho^{(k)}G=D^{(k)}.
\]
The threshold-ECDSA layer detects or identifies any later scalar contribution that does not match the accepted public share.
\end{revision}
\qed
\end{prf}

\begin{lemma}[Threshold-signing robustness]
\label{lem:signing-robust}
Under the signing-interface requirement, invalid signer contributions cannot produce an invalid accepted signature; the interface may instead abort visibly.
\end{lemma}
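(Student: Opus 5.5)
The plan is to prove Lemma~\ref{lem:signing-robust} by a direct reduction to the interface requirement of \S\ref{sec-2N}, after checking that every key the protocol signs under is one that requirement covers. Three signing keys appear: the sender key \(A\) in Step~1.3, the intermediate child key \(B^{(k)}\) obtained by \(\mathsf{TS.Tweak}\) with \(\omega^{(k)}\) in Step~1.1, and the one-time key \(D^{(k)}\) obtained by a further tweak with \(\rho^{(k)}\) in Step~2.3.

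The first step is to confirm that each of these keys comes from the honestly set-up base key through public additive tweaks computed from agreed inputs.
\begin{itemize}
\item For \(B^{(k)}\), this follows from Lemma~\ref{lem:derivation-consistency}.
\item For \(D^{(k)}\), the DH value entering \(\rho^{(k)}\) is built only from accepted terms. By Lemma~\ref{lem:partial-dh-consistency}, except with negligible probability these terms are the honest values \(b_j^{(k)}A\).
\item The public shares \(D_j^{(k)}\) passed to the signing layer have been checked as in Lemma~\ref{lem:share-verification}.
\end{itemize}
Together these points show the protocol calls \(\mathsf{TS.Sign}\) only on states matching an honestly tweaked key. The adversary's extra view consists of DH terms, DLEQ proofs, and public offsets, and the requirement explicitly allows this view.

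The second step is a case split on how a signing session ends.
\begin{itemize}
\item If the interface aborts visibly, then by Definition~4 this is not a robustness violation, and no successor state is committed.
\item If it outputs \(\sigma\), honest parties accept only when \(\mathsf{TS.Verify}(X',m,\sigma)=1\), with \(X'\) the agreed tweaked key and \(m\) the agreed record. Suppose instead that an accepted output were invalid, or did not match the qualified subset's authorized message. Then the interface would have failed its correctness-or-abort guarantee. Equivalently, it would have produced a signature on an unauthorized message under a covered key.
\end{itemize}
In either sub-case we obtain an adversary against the interface requirement, with the same success probability up to \(\epsilon_{\mathrm{chk}}\).

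The main obstacle is to make the reduction faithful across all polynomially many sessions and related keys. The simulator must forward every authorized request together with its key identifier and the fresh preprocessing for \(X'\). It must also guarantee that no nonce or presignature is reused across keys, because such reuse falls outside the requirement's hypothesis. For this I would rely on the freshness rules stated in Steps~1.3 and~2.3. The result is a bound \(\epsilon_{\mathrm{TS}}+\epsilon_{\mathrm{chk}}\) on the probability of accepting an invalid signature, which is negligible.
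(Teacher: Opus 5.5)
Your proposal is correct and follows the paper's own route. The paper's proof simply invokes the signing-interface requirement of \S\ref{sec-2N} for the reject-or-abort guarantee, and notes that the outer protocol checks the resulting signature and commits no successor state on failure. Your preliminary step checking that every signing key is an honestly tweaked, covered key (via Lemmas~\ref{lem:derivation-consistency}, \ref{lem:partial-dh-consistency}, and~\ref{lem:share-verification}) is scaffolding the paper leaves implicit. Also, given the outer \(\mathsf{TS.Verify}\) check on the agreed key and message, your ``accepted but invalid'' sub-case is actually vacuous.
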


\begin{prf}[Lemma~\ref{lem:signing-robust}]
\begin{revision}
The interface requirement supplies this property. The outer protocol checks the resulting signature and commits no successor state on failure.
\end{revision}
\qed
\end{prf}

\noindent\textit{Proof of Theorem~\ref{thm:robustness}.}
We verify each condition of Definition~4 (\S\ref{sec:security-goals}):
\begin{itemize}
\item \qw{\textit{Invalid contribution acceptance}: Robust setup, deterministic derivation, DLEQ proofs, and per-share checks prevent acceptance (Lemmas~\ref{lem:dkg-robust}--\ref{lem:share-verification}).}
\item \textit{Divergent accepted state}: The agreed record and common confirmation rule give honest acceptances the same deterministic successor (Lemma~\ref{lem:derivation-consistency}). Failed checks or missing responses cause no commit.
\end{itemize}
Hence $\Pr[\mathsf{Exp}^{\mathrm{rob}}_{\mathcal{A}}(1^\kappa) = 1] \leq \mathsf{negl}(\kappa)$.
\qed


\begin{thebibliography}{10}

\bibitem{wang2025understanding}
Qin Wang, Guangsheng Yu, Yilin Sai, Caijun Sun, Lam~Duc Nguyen, and Shiping Chen.
\newblock Understanding {DAO}s: An empirical study on governance dynamics.
\newblock {\em IEEE TCSS}, 12(5):2814--2832, 2025.

\bibitem{tang2025decentralized}
Caiyan Tang, Qi~Cai, Chengzu Dong, et~al.
\newblock Decentralised autonomous organizations ({DAO}s): An exploratory survey.
\newblock {\em ACM DLT)}, pages 1--25, 2025.

\bibitem{lindell2018fast}
Yehuda Lindell and Ariel Nof.
\newblock Fast secure multiparty ecdsa with practical distributed key generation and applications to cryptocurrency custody.
\newblock In {\em ACM CCS}, pages 1837--1854, 2018.

\bibitem{canetti2020uc}
Ran Canetti, Rosario Gennaro, Steven Goldfeder, Nikolaos Makriyannis, and Udi Peled.
\newblock Uc non-interactive, proactive, threshold ecdsa with identifiable aborts.
\newblock In {\em ACM CCS}, pages 1769--1787, 2020.

\bibitem{doerner2024threshold}
Jack Doerner, Yashvanth Kondi, Eysa Lee, and abhi Shelat.
\newblock Threshold {ECDSA} in three rounds.
\newblock In {\em IEEE SP}, pages 3053--3071, 2024.

\bibitem{vanSaberhagen2013cryptonote}
Nicolas V.~Saberhagen.
\newblock {CryptoNote v2.0}.
\newblock \url{https://www.getmonero.org/resources/research-lab/pubs/cryptonote-whitepaper.pdf}, 2013.
\newblock Whitepaper.

\bibitem{wang2024dsag}
Yujue Wang, Lin Zhong, Jun Du, Yudi Zou, Kevin He, and Andrew Zhang.
\newblock A distributed stealth address generation protocol for threshold signatures.
\newblock In {\em IEEE ISPA}, pages 2014--2021, 2024.

\bibitem{pu2023stealth}
Sihang Pu, Sri Aravinda~Krishnan Thyagarajan, Nico Doettling, and Lucjan Hanzlik.
\newblock Post quantum fuzzy stealth signatures and applications.
\newblock In {\em ACM CCS}, pages 371--385, 2023.

\bibitem{bip32}
Pieter Wuille.
\newblock {BIP-0032: Hierarchical Deterministic Wallets}.
\newblock \url{https://github.com/bitcoin/bips/blob/master/bip-0032.mediawiki}, 2012.
\newblock Bitcoin Improvement Proposal 32, assigned 2012-02-11.

\bibitem{zhong2023dkd}
Lin Zhong, Yujue Wang, Yong Ding, Jun Du, Kevin He, and Andrew Zhang.
\newblock Distributed key derivation for multi-party management of blockchain digital assets.
\newblock In {\em IEEE ICPADS}, pages 715--720, 2023.

\bibitem{zhong2023fast2n}
Lin Zhong, Yujue Wang, Jun Du, Daji Liang, Ziyuan Zhong, Kevin He, and Andrew Zhang.
\newblock Fast 2-out-of-n ecdsa threshold signature.
\newblock In {\em IEEE ISPA}, pages 456--465, 2023.

\bibitem{rivest2001ring}
Ronald~L. Rivest, Adi Shamir, and Yael Tauman.
\newblock How to leak a secret.
\newblock In {\em ASIACRYPT}, volume 2248, pages 552--565. Springer, 2001.

\bibitem{noether2016ringct}
Shen Noether, Adam Mackenzie, and {The Monero Research Lab}.
\newblock Ring confidential transactions.
\newblock {\em Ledger}, 1:1--18, 2016.

\bibitem{gennaro2007dkg}
Rosario Gennaro, Stanislaw Jarecki, Hugo Krawczyk, and Tal Rabin.
\newblock Secure distributed key generation for discrete-log based cryptosystems.
\newblock {\em Journal of Cryptology}, 20(1):51--83, 2007.

\bibitem{kate2024nivss}
Aniket Kate, Easwar~Vivek Mangipudi, Pratyay Mukherjee, Hamza Saleem, and Sri Aravinda~Krishnan Thyagarajan.
\newblock Non-interactive {VSS} using class groups and application to {DKG}.
\newblock In {\em ACM CCS}, 2024.

\bibitem{wang2019dao}
Shuai Wang, Wenwen Ding, Juanjuan Li, Yong Yuan, Liwei Ouyang, and Fei-Yue Wang.
\newblock Decentralized autonomous organizations: Concept, model, and applications.
\newblock {\em IEEE TCSS}, 6(5):870--878, 2019.

\bibitem{liu2021dao}
Lu~Liu, Sicong Zhou, Huawei Huang, and Zibin Zheng.
\newblock From technology to society: An overview of blockchain-based {DAO}.
\newblock {\em IEEE Open Journal of the Computer Society}, 2:204--215, 2021.

\bibitem{yu2023leveraging}
Guangsheng Yu, Qin Wang, Tingting Bi, Shiping Chen, and Xiwei Xu.
\newblock Leveraging architectural approaches in web3 applications-a dao perspective focused.
\newblock In {\em ICBC}, 2023.

\bibitem{bellavitis2023dao}
Cristiano Bellavitis, Christian Fisch, and Paul~P. Momtaz.
\newblock The rise of decentralized autonomous organizations ({DAO}s): a first empirical glimpse.
\newblock {\em Venture Capital}, 25(2):187--203, 2023.

\bibitem{fabrega2025voting}
Andres Fabrega, Amy Zhao, Jay Yu, James Austgen, Sarah Allen, Kushal Babel, Mahimna Kelkar, and Ari Juels.
\newblock $\{$Voting-Bloc$\}$ entropy: A new metric for $\{$DAO$\}$ decentralization.
\newblock In {\em USENIX Security}, pages 1299--1318, 2025.

\bibitem{sharma2024unpacking}
Tanusree Sharma, Yujin Potter, Kornrapat Pongmala, Henry Wang, Andrew Miller, Dawn Song, and Yang Wang.
\newblock Unpacking how decentralized autonomous organizations (daos) work in practice.
\newblock In {\em ICBC}, pages 416--424, 2024.

\bibitem{feichtinger2023hidden}
Rainer Feichtinger, Robin Fritsch, Yann Vonlanthen, and Roger Wattenhofer.
\newblock The hidden shortcomings of (d) aos--an empirical study of on-chain governance.
\newblock In {\em FC}, pages 165--185, 2023.

\bibitem{feichtinger2024sok}
Rainer Feichtinger, Robin Fritsch, Lioba Heimbach, Yann Vonlanthen, and Roger Wattenhofer.
\newblock Sok: Attacks on {DAOs}.
\newblock {\em AFT}, 2024.

\bibitem{thyagarajan2025vitarit}
Sri A.~Krishnan Thyagarajan, Easwar~Vivek Mangipudi, Lucjan Hanzlik, Aniket Kate, and Pratyay Mukherjee.
\newblock Vitarit: Paying for threshold services on bitcoin and friends.
\newblock In {\em IEEE SP}, pages 2018--2036, 2025.

\bibitem{liu2005linkable}
Joseph~K Liu and Duncan~S Wong.
\newblock Linkable ring signatures: Security models and new schemes.
\newblock In {\em ICCSA}, pages 614--623. Springer, 2005.

\bibitem{sun2017ringct}
Shi-Feng Sun, Man~Ho Au, Joseph~K Liu, and Tsz~Hon Yuen.
\newblock Ringct 2.0: A compact accumulator-based (linkable ring signature) protocol for blockchain cryptocurrency monero.
\newblock In {\em ESORICS}, pages 456--474, 2017.

\bibitem{yuen2020ringct}
Tsz~Hon Yuen, Shi-feng Sun, Joseph~K Liu, Man~Ho Au, Muhammed~F Esgin, Qingzhao Zhang, and Dawu Gu.
\newblock Ringct 3.0 for blockchain confidential transaction: Shorter size and stronger security.
\newblock In {\em FC}, pages 464--483. Springer, 2020.

\bibitem{duan2024concise}
Junke Duan, Shihui Zheng, Wei Wang, Licheng Wang, Xiaoya Hu, and Lize Gu.
\newblock Concise ringct protocol based on linkable threshold ring signature.
\newblock {\em IEEE TDSC}, 21(5):5014--5028, 2024.

\bibitem{glaeser2022foundations}
Noemi Glaeser, Matteo Maffei, Giulio Malavolta, Pedro Moreno-Sanchez, Erkan Tairi, and Sri Aravinda~Krishnan Thyagarajan.
\newblock Foundations of coin mixing services.
\newblock In {\em ACM CCS}, pages 1259--1273, 2022.

\end{thebibliography}
\end{document}